\newtheorem{theorem}{Theorem}
\newtheorem{remark}{\indent Remark}
\newtheorem{lemma}[theorem]{Lemma}
\newtheorem{assumption}{Assumption}
\newcommand{\ignore}[1]{{}}
\newcommand{\Rmnum}[1]{\expandafter\@slowromancap\romannumeral #1@}
\begin{document}
\title{\Large Mobile Edge Computing Networks: Online Low-Latency and Fresh Service Provisioning}

\author{\IEEEauthorblockN
{
Yuhan Yi,~Guanglin Zhang,~and Hai Jiang}
\vspace{-10mm}
\thanks{
Y.~Yi and G.~Zhang are with the College of Information Science and Technology, Donghua University, Shanghai 201620, China (e-mail: yuhanyi@mail.dhu.edu.cn; glzhang@dhu.edu.cn.). H.~Jiang is with the Department of Electrical and Computer Engineering, University of Alberta, Edmonton, AB T6G 1H9, Canada (e-mail:
hai1@ualberta.ca).
}
}

\maketitle
\markboth{ }
\maketitle

\begin{abstract}
Edge service caching can significantly mitigate latency and reduce communication and computing overhead by fetching and initializing services (applications) from clouds. The freshness of cached service data is critical when providing satisfactory services to users, but has been overlooked in existing research efforts. In this paper, we study the online low-latency and fresh service provisioning in mobile edge computing (MEC) networks. Specifically, we jointly optimize the service caching, task offloading, and resource allocation without knowledge of future system information, which is formulated as a joint online long-term optimization problem. This problem is NP-hard. To solve the problem, we design a Lyapunov-based online framework that decouples the problem at temporal level into a series of per-time-slot subproblems. For each subproblem, we propose an online integrated optimization-deep reinforcement learning (OIODRL) method, which contains an optimization stage including a quadratically constrained quadratic program (QCQP) transformation and a semidefinite relaxation (SDR) method, and a learning stage including a deep reinforcement learning (DRL) algorithm. Extensive simulations show that the proposed OIODRL method achieves a near-optimal solution and outperforms other benchmark methods.
\end{abstract}

\begin{IEEEkeywords}
Mobile edge computing, service caching, age of information, deep reinforcement learning
\end{IEEEkeywords}

\IEEEpeerreviewmaketitle

\section{Introduction}\label{Introduction}
Mobile edge computing (MEC) has the potential to achieve low latency services by deploying cloud capabilities to the network edge \cite{10275092}. In MEC, edge servers (ESs) are equipped with communication, computing, and storage resources to cache a variety of services, such as databases and programs, from remote cloud servers (CSs). Hence, the corresponding tasks from users can be processed locally at the ESs, rather than being further offloaded to the CSs, which may incur long latency. 
This technology, referred to as \textit{edge service caching}, can cache popular services, thereby saving repeated backhaul transmissions to and from the CSs.
This alleviates the pressure on backhaul links and reduces system energy consumption. 
Additionally, it can provide low-latency services by processing tasks in close proximity to users \cite{9462377}.
With these advantages, edge service caching benefits newly emerging applications that are computation-intensive and latency-sensitive, including virtual reality, autonomous driving, face recognition, etc. \cite{9599706}. 

Edge service caching has been extensively studied in the literature, aiming at utilizing limited edge resources to make intelligent service caching decisions that improve system performance.
Many studies focus on the service request delay of users when edge service caching is applied \cite{9983987,10293006,9462377}.
Work \cite{9462377} studies the service caching problem to minimize the request delay in multi-tiered edge cloud networks, and proposes two approximation algorithms for the case of a single service type. An efficient heuristic algorithm considering different service types is also proposed.
The cache storage size of each ES is fixed. Work \cite{9983987} considers adaptive cache storage size and investigates the tradeoff between the service request delay and cache storage cost, since larger cache storage helps to reduce delay while requiring more hardware cost. 
In both works \cite{9462377} and \cite{9983987}, an ES caches a service only when it gets a request of the service from users. On the other hands, work \cite{10293006} predicts future user requests by using a graph attention network-based service request prediction method in intelligent cyber-physical transportation systems, and caches related services in advance.  

The above works \cite{9983987,10293006,9462377} assume that each ES has sufficient computing capability, i.e., the ES is able to process all user requests if the related services are already cached at the ES. However, practically each ES has limited computing capability, which means that the ES may have to \textit{offload} some user requests to the CS even if the related services are cached at the ES. 
Thus, it is necessary to consider task offloading when investigating edge service caching to optimize MEC systems.
Some existing works \cite{9832009,9808348,9362261} jointly consider service caching and task offloading in MEC systems.
Specifically, work \cite{9832009} focuses on the service latency minimization problem, and work \cite{9808348} aims to minimize the system cost including task delay and energy consumption.
Both works \cite{9832009} and \cite{9808348} consider that each user request is allocated a fixed amount of computing and communication resources.
To alleviate users’ competition for limited computing resources, a service pricing scheme is proposed in work \cite{9362261}. All computing resources are equally shared by all tasks served by the ES, while each task has a fixed amount of communication resources.

Generally different tasks have different computational workloads and data sizes, and thus, it is beneficial to adaptively allocate computing and communication resources among users when designing service caching and task offloading.
Accordingly, there are research efforts jointly investigating service caching, task offloading, and resource allocation \cite{9852277,10103637,9815021,9789202} in MEC systems.
Work \cite{9852277} considers a single-server MEC system, using the location-aware users' preference and proposing a hybrid learning framework to solve the joint optimization of service caching, task offloading, and resource allocation. Works \cite{10103637} and \cite{9815021} consider a multi-server collaborative MEC architecture, in which work \cite{10103637} proposes an approximation and decomposition theory-based two-stage algorithm to maximize users' quality of experience, and work \cite{9815021} develops a deep deterministic policy gradient-based algorithm to minimize the energy consumption. Different from works \cite{10103637} and \cite{9815021} with homogeneous ESs, work \cite{9789202} considers heterogeneous ESs, in which an ES could be a base station in cellular networks or an access point in wireless local area networks. Energy consumption is minimized by jointly optimizing service caching, task offloading, and resource allocation.

All the above existing studies \cite{9983987,10293006,9462377,9832009,9808348,9362261,9852277,10103637,9815021,9789202} make efforts to reduce user request delay and/or energy consumption by investigating service caching, task offloading, and/or resource allocation. However, they implicitly assume that the cached edge services (e.g., databases and programs) do not need to be updated/refreshed. In other words, they do not consider the {\it freshness} of cached edge services.
The freshness of service data actually plays a crucial role in the performance of MEC networks. 
For example, in the context of autonomous driving applications, road side units equipped with ESs can cache the neural network model from CSs for objective recognition service \cite{10007043}. The freshness of the neural network model's parameters for processing the road information is quite critical in practice since the road conditions may change over time, which requires a fresh neural network model.
Thus, in edge service caching networks, the freshness of service data should also be considered together with other system performance such as delay.

Our paper aims to fill the research gap of service data freshness at ESs. Specifically, we investigate online low-latency and fresh service provisioning in MEC networks by jointly optimizing service caching, task offloading, and resource allocation. An online low-complexity and near-optimal solution is provided.

The novelty of our research and our technical contributions are presented below:
\begin{itemize}
\item \textbf{Fresh service provisioning in MEC networks}: 
This work is the first in the research society that points out the importance of service freshness at the ES and develops an effective algorithm to provide online low-latency and fresh services.
\item \textbf{Low-complexity and near-optimal solution}:
The considered problem, i.e., joint design of service caching, task offloading, and resource allocation with consideration of service freshness, is challenging due to the following reasons. First, downloading decision, caching decision,\footnote{The downloading decision refers to whether the ES downloads service data from the CS, while the caching decision refers to whether service data is cached at the ES.} offloading decision, and computing and communication resource allocation decision are coupled together, which makes the research problem extremely hard to solve. Second, the freshness consideration of service data adds temporal coupling of the decisions, making it even harder to achieve optimal performance. To address the second challenge, we use a Lyapunov-based online framework to decouple the temporal correlation among time slots. To address the first challenge, traditional deep reinforcement learning (DRL) methods do not work well for our problem due to dynamic service data size. Thus, we propose to use an optimization method to first provide a rough idea of downloading and caching decisions of services, referred to as the {\it optimization stage},
and then use a DRL algorithm to explore a better solution, referred to as the {\it learning stage}.
The proposed optimization-plus-DRL method outperforms the optimization-only method and the DRL-only method.
Near-optimal performance is achieved by the Lyapunov-based online framework and the optimization-plus-DRL method.


\item \textbf{Quadratically constrained quadratic program (QCQP) transformation in the optimization stage:} The optimization stage aims at a rough idea of downloading and caching decisions, which is an NP-hard problem. Through a series of math manipulations, we successfully transform the problem into a QCQP problem, which is effectively solved by the help of the semidefinite relaxation (SDR) method.

\item \textbf{A novel method to reduce the action space in the learning stage}: We propose a method, by using an optimization module, to reduce the action space of the DRL by half, which facilitates convergence and enhances training performance.

\end{itemize}

The rest of the paper is arranged as follows.
Section~\ref{Sec:system model} presents a detailed definition of the system with edge service caching and formulates a joint online long-term cost minimization problem. Section~\ref{Sec:Lyapunov-Based Online Framework} presents the Lyapunov-based online framework, and Section~\ref{Sec:algorithm} details the optimization-plus-DRL method. Section~\ref{Sec:simulation} presents the experiment results. Section~\ref{Sec:conclusion} concludes the paper.

\begin{figure}
    \centering
    \includegraphics[width=0.65\linewidth]{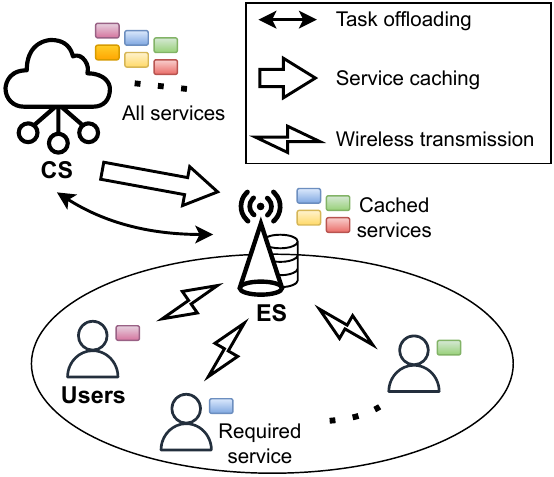}
    \caption{System model illustration.}
    \label{fig:system_model}
\end{figure}

\section{System Model and Problem Formulation}\label{Sec:system model}
Consider a general MEC system, which consists of a CS, a base station (BS) integrated with an ES, and multiple users, as shown in Fig. \ref{fig:system_model}. 
The CS can provide totally $J$ service types denoted as $\mathcal{J}=\{1,2,...,J\}$, e.g., face recognition, virtual/augmented reality, etc.
The CS keeps refreshing the service data for all the service types. 
Service data updating of the $J$ service types are independent and may not occur simultaneously.
For each service type, the updating may not be periodical. 
The BS provides radio access to users and accepts users' service requests.
The ES embedded in the BS can download service data of some service types and cache them locally, which means that the ES is able to process some service requests locally.
Upon the arrival of a user request, the ES can decide to process the request locally or offload it to the CS for processing.
For example, when the ES does not cache the service data or lacks sufficient computing resources to process the request, it may offload the request to the CS.
There are $I$ users denoted as $\mathcal{I}=\{1,2,...,I\}$. We consider $T$ time slots denoted as $\mathcal{T}=\{0,1,...,T-1\}$, and each user can request at most one service type per time slot.\footnote{Our system can be straightforwardly extended to the case where a user could request multiple service types in a time slot. In such a case, each user could be viewed as multiple virtual users located at the same site. Our method can then be applied to serve these multiple virtual users effectively.}

\subsection{Offloading, Downloading, and Caching decisions}
For presentation simplicity, when we say ``service $j\in\mathcal{J}$", it means service data of service type $j$. And if a user requests service type $j$, we say the user has ``task $j$".

At the CS, service type $j$ at time slot $t\in\mathcal{T}$ is represented as a triplet $\zeta_j(t)\triangleq\{S_j(t),p^\text{p}_j(t),p^\text{r}_j(t)\}$: $S_j(t)$ is the data size of service $j$ at time slot $t$, $p^\text{p}_j(t)$ is the \textit{purchasing price} of service $j$ (i.e., the price that the ES should pay the CS for downloading entire service $j$), and $p^\text{r}_j(t)$ is the \textit{refreshing price} of service $j$ (i.e., the price that the ES should pay the CS for downloading the refreshed portion of service $j$). Generally we have $p^\text{r}_j(t)\leq p^\text{p}_j(t)$.


For user $i\in\mathcal{I}$, its task $j$ at time slot $t$ is described as a triplet $\xi_{ij}(t)\triangleq\{S_{ij}^\text{u}(t),S_{ij}^\text{d}(t),Y_{ij}(t)\}$, where $S_{ij}^\text{u}(t)$ and $S_{ij}^\text{d}(t)$ are task $j$'s data sizes of wireless uplink (data to be processed) and wireless downlink (processing result) between user $i$ and the ES at time slot $t$,
and $Y_{ij}(t)$ means the required CPU cycles when processing user $i$'s task $j$ at time slot $t$. If user $i$ does not request task $j$ at time slot $t$, we have $S_{ij}^\text{u}(t)=0, S_{ij}^\text{d}(t)=0, Y_{ij}(t)=0$.

Each user sends its task to the ES by using the wireless uplink. The ES could process the task locally (e.g., when the ES has the cached service data and has sufficient computing resources), or offload the task to the CS for processing.
Let $x_{ij}(t)\in\{0,1\}$ denote the \textit{offloading decision} of user $i$'s task $j$ at time slot $t$: $x_{ij}(t)=1$ means that the task is processed at the ES locally; and $x_{ij}(t)=0$ means that the task is offloaded to the CS. When the task is offloaded to the CS, the ES should pay the CS for \textit{computing cost}, denoted $c_{ij}(t)$, which is related to the size of the task data (i.e., $S_{ij}^\text{u}(t)$) as
\begin{equation}
c_{ij}(t)=\lambda_s S^\text{u}_{ij}(t)(1-x_{ij}(t)), \label{eq:computing_cost}
\end{equation}
where $\lambda_s$ is the computing cost per unit size.

The ES may download the service data of some services from the CS and cache them locally. Denote $y_j(t)\in\{0,1\}$ as the \textit{downloading decision} of service $j$ at the ES at time slot $t$: 
\begin{itemize} 
\item $y_{j}(t)=1$ means that the ES decides to download service $j$ from the CS at time slot $t$. Specifically, if the service data is cached at the ES at the previous time slot (i.e., time slot $t-1$), the ES only downloads the refreshed portion of the service data from the CS and pays the refreshing price $p^\text{r}_j(t)$; otherwise, the ES downloads the entire service data and pays the purchasing price $p^\text{p}_j(t)$.  
\item $y_{j}(t)=0$ means that the ES decides not to download service $j$ from the CS at time slot $t$.
\end{itemize}  

Denote $z_{j}(t)\in\{0,1\}$ as the \textit{caching decision} of service $j$ at the ES at time slot $t$, indicating that the service is cached at the ES (i.e., $z_{j}(t)=1$) or not (i.e., $z_{j}(t)=0$). In general, when the ES downloads the service data at time slot $t$ (i.e., $y_{j}(t)=1$), then after the downloading, the service data is cached at time slot $t$ (i.e., $z_{j}(t)=1$). In subsequent time slots, say time slot $t'$, the ES may continue to cache the service data (i.e., $z_{j}(t')=1$) or remove the service data from its storage (i.e.,  $z_{j}(t')=0$).

As the storage capacity of the ES is limited, the ES can only cache a partial number of services, which satisfies the storage capacity constraint as follows:
\begin{align}
\sum_{j\in \mathcal{J}} S_j(t)z_j(t)\leq S, \forall t,
\label{cons:S}
\end{align}
where $S$ is the total storage capacity of the ES.

If the ES downloads service $j$ at time slot $t$, denote $p_j(t)$ as the price (purchasing price or refreshing price) that the ES pays to the CS. Then we have
\begin{align}\label{eq:price}
p_j(t)=p_j^\text{p}(t)(1-z_{j}(t-1))+p_j^\text{r}(t)z_{j}(t-1).
\end{align}

The decisions $x_{j}(t)$, $y_{j}(t)$, and $z_{j}(t)$ are coupled, as follows.
Consider user $i$ and service $j$. 
\begin{itemize} 
\item If service $j$ is not cached at time slot $t-1$ (i.e., $z_j(t-1)=0$), then at time slot $t$ we have: the service data is cached (i.e., $z_j(t)=1$) if the ES downloads the service data from the CS (i.e., $y_j(t)=1$); or not cached (i.e., $z_j(t)=0$) otherwise (i.e., $y_j(t)=0$). In other words, we have
\begin{align}
\text{~if~}z_{j}(t-1)=0,\text{~then~}y_{j}(t)=z_{j}(t),\forall j,t. \label{cons:zy1}
\end{align}

\item If service $j$ is cached at time slot $t-1$ (i.e., $z_j(t-1)=1$), then at time slot $t$ the ES may: (i) download and cache the refreshed portion from the CS (i.e., $y_j(t)=1$ and $z_j(t)=1$); (ii) not change the cached service data (i.e., $y_j(t)=0$ and $z_j(t)=1$); or (iii) remove the service data from its storage (i.e.,  $y_j(t)=0$ and $z_j(t)=0$). In other words, we have 
\begin{align}
\text{~if~}z_{j}(t-1)=1,\text{~then~}y_{j}(t)\leq z_{j}(t),\forall j,t. \label{cons:zy2}
\end{align}

\item Further, since the ES can process a task only when it has cached the related service data, we have
\begin{equation}
x_{ij}(t)\leq z_{j}(t), \forall i,j,t.
\label{cons:xy}
\end{equation}
\end{itemize}

\subsection{Delay Model}
For user $i$'s task $j$, we discuss the delay in two cases: when the task is processed locally at the ES, and when it is offloaded to the CS for processing.
\subsubsection{Processing locally at the ES}
If user $i$'s task $j$ at time slot $t$ is processed locally at the ES, the delay is denoted as $D_{ij}^\text{L}(t)$, which consists of the uplink transmission delay denoted $D_{ij}^\text{u}(t)$, the processing delay of the ES denoted  $D_{ij}^\text{e}(t)$, and the downlink transmission delay for computation result feedback denoted $D_{ij}^\text{d}(t)$. 
Let $w^\text{u}_{ij}(t)$ and $w^\text{d}_{ij}(t)$ denote the uplink and downlink bandwidth allocated to user $i$'s task $j$ at time slot $t$, respectively, which satisfy the following constraints:
\begin{align}
\sum_{i\in \mathcal{I}}\sum_{j\in\mathcal{J}}w^\text{u}_{ij}(t)\leq W^\text{u},~\sum_{i\in \mathcal{I}}\sum_{j\in\mathcal{J}}w^\text{d}_{ij}(t)\leq W^\text{d},\forall t,
\label{cons:Wud}
\end{align}
where $W^\text{u}$ and $W^\text{d}$ are the total uplink and downlink bandwidth, respectively. 
Then, transmission delays $D_{ij}^\text{u}(t)$ and $D_{ij}^\text{d}(t)$ are calculated by $D_{ij}^\text{u}(t)= \frac{S_{ij}^\text{u}(t)}{\eta^\text{u}_i(t) w^\text{u}_{ij}(t)}$ and
$D_{ij}^\text{d}(t)=\frac{S_{ij}^\text{d}(t)}{\eta^\text{d}_i(t) w^\text{d}_{ij}(t)}$, where $\eta^\text{u}_i(t)$ and $\eta^\text{d}_i(t)$ are the uplink and downlink spectral efficiency (the achievable transmission rate over 1Hz bandwidth) at time slot $t$, respectively.
Besides, denote $f^\text{e}_{ij}(t)$ as the computing rate (the number of CPU cycles per unit time) allocated to user $i$'s task $j$ at time slot $t$, which is constrained by the ES's total computing capacity denoted $F$ as follows:
\begin{align}
\sum_{i\in \mathcal{I}}\sum_{j\in \mathcal{J}}f^\text{e}_{ij}(t)\leq F, \forall t.
\label{cons:F}
\end{align}
Then the processing delay at the ES is $D_{ij}^\text{e}(t) = \frac{Y_{ij}(t)}{f^\text{e}_{ij}(t)}$.

Thus, we have the total delay of the local processing as
\begin{align}\label{eq:delay_ES}
D_{ij}^\text{L}(t) = D_{ij}^\text{u}(t)+D_{ij}^\text{e}(t)+D_{ij}^\text{d}(t).
\end{align}
\subsubsection{Offloading to the CS}
If user $i$'s task $j$ at time slot $t$ is offloaded to the CS for processing, the delay is denoted as $D_{ij}^\text{O}(t)$, which includes the uplink transmission delay $D_{ij}^\text{u}(t)$, the transmission delay between the ES and CS denoted $D_{ij}^\text{ec}(t)$, the cloud processing delay denoted $D_{ij}^\text{c}(t)$, and the downlink transmission delay for computation result feedback $D_{ij}^\text{d}(t)$, as follows:
\begin{align}\label{eq:delay_CS}
D_{ij}^\text{O}(t) = D_{ij}^\text{u}(t)+D_{ij}^\text{ec}(t)+D_{ij}^\text{c}(t)+D_{ij}^\text{d}(t),
\end{align}
where $D_{ij}^\text{u}(t)$ and $D_{ij}^\text{d}(t)$ have the same expressions as those in the case of processing locally at the ES.
$D_{ij}^\text{ec}(t)$ is expressed as
$D_{ij}^\text{ec}(t) = \frac{S_{ij}^\text{u}(t)+S_{ij}^\text{d}(t)}{r^\text{ec}}$, where $r^\text{ec}$ represents the transmission rate between the ES and CS.
$D_{ij}^\text{c}(t)$ is expressed as $D_{ij}^\text{c}(t) = \frac{Y_{ij}(t)}{f^\text{c}}$, where $f^\text{c}$ is the computing rate allocated by the CS to each task. 

Overall, the processing delay of user $i$'s task $j$ at time slot $t$ is denoted as $D_{ij}(t)$ and given by 
\begin{align}
D_{ij}(t)=D_{ij}^\text{L}(t)x_{ij}(t)+D_{ij}^\text{O}(t)(1-x_{ij}(t)). \label{eq:delay_cost}
\end{align}

\subsection{Age of Information Model}
We leverage the age of information (AoI) metric to quantify the freshness of service data. For any service data, its AoI is defined as the time elapsed since the moment when the service data version is generated at the CS.

Since both the CS and the ES have service data, we have AoI for each service data at the CS and at the ES, as follows. 

Let $A_{j}^\text{c}(t)$ denote the AoI (unit: time slot) of service $j$ at the CS at the beginning of time slot $t$, which is the time elapsed since the most recent updating of the service data at the CS. 

Let $A_{j}^\text{e}(t)$ denote the AoI of the cached service $j$ at the ES, which is the time elapsed since the cached service version was generated at the CS.
At time slot $t$, when the ES downloads and caches service $j$ from the CS (i.e., $z_j(t)=1,y_j(t)=1$), we have $A_{j}^\text{e}(t) = A_{j}^\text{c}(t)$; when the ES does not download service $j$ from the CS but still keeps caching the cached service $j$ (i.e., $z_j(t)=1,y_j(t)=0$), we have $A_{j}^\text{e}(t)=A_{j}^\text{e}(t-1)+1$; when the ES does not cache service $j$ at time slot $t$ (i.e., $z_j(t)=0,y_j(t)=0$), we virtually define the AoI of service $j$ at the ES the same as that at the CS, i.e. $A_{j}^\text{e}(t) = A_{j}^\text{c}(t)$.\footnote{The reason for this setting is because if service $j$ is not cached at the ES, then the corresponding task will be served at the CS, which means that the service $j$ that the user receives has AoI $A_{j}^\text{c}(t)$.}

So the AoI of service $j$ at the ES at time slot $t$ is given by
\begin{equation}\label{eq:AoI_edge}
A_{j}^\text{e}(t)=
\begin{cases}
A_{j}^\text{c}(t),&\text{~if~}z_j(t)=1,~y_j(t)=1,\\
A_{j}^\text{e}(t-1)+1,&\text{~if~}z_j(t)=1,~y_j(t)=0,\\
A_{j}^\text{c}(t),&\text{~if~}z_j(t)=0,~y_j(t)=0.
\end{cases}
\end{equation}


\subsection{Problem Formulation}\label{Problem Formulation}
We consider the total cost of the ES as a weighted sum of delay cost, computing cost, and service purchasing/refreshing cost, which is denoted as $C(t)$ and given by
\begin{align}\label{e:ct_expression}
C(t)= \sum_{i\in \mathcal{I}}\sum_{j\in \mathcal{J}}[\lambda_D D_{ij}(t)+\lambda_c c_{ij}(t)]+\sum_{j\in \mathcal{J}}\lambda_p p_{j}(t)y_{j}(t),
\end{align}
where $D_{ij}(t)$ is given in (\ref{eq:delay_cost}), $c_{ij}(t)$ is given in (\ref{eq:computing_cost}), and $p_{j}(t)$ is given in (\ref{eq:price}).
$\lambda_D$, $\lambda_c$, and $\lambda_p$ are weights of the three costs.

Besides providing low-latency services for users, the ES also ensures that its services are fresh to serve users' tasks. Therefore, it is required that the long-term average AoI of each service at the ES is bounded by a threshold value. Denote the threshold value for service type $j$ as $A_j^{\max}$, and we have 
\begin{align}
\lim\limits_{T\to \infty}\frac{1}{T}\sum_{t=0}^{T-1}\mathbb{E}[A_j^\text{e}(t)]\leq A_j^{\max},~\forall j. \label{cons:aoi}
\end{align}

To minimize the long-term average cost of the ES, we formulate the following optimization problem:
\begin{subequations}
\begin{align}
\mathcal{P}1:~
&\min_{\boldsymbol x,\boldsymbol y,\boldsymbol z,\boldsymbol{\omega},\boldsymbol f}~\lim\limits_{T\to \infty}\frac{1}{T}\sum_{t=0}^{T-1} \mathbb{E}[C(t)]\nonumber\\
\text{s.t.}~~
&(\ref{cons:S}),(\ref{cons:zy1}),(\ref{cons:zy2}),(\ref{cons:xy}),(\ref{cons:Wud}),(\ref{cons:F}),(\ref{cons:aoi}),\notag \\
&x_{ij}(t),y_{j}(t),z_{j}(t) \in \{0,1\},~\forall i,j,t,\label{cons:xyz}\\
&f^\text{e}_{ij}(t)\geq 0,~\forall i,j,t,\label{cons:fij}\\
&w^\text{d}_{ij}(t),w^\text{d}_{ij}(t)\geq 0,~\forall i,j,t,\label{cons:wij}
\end{align}
\end{subequations}
where
$\boldsymbol x\triangleq [x_{ij}(t)]_{i\in\mathcal{I},j\in\mathcal{J},t\in\mathcal{T}}$,
$\boldsymbol y\triangleq [y_{j}(t)]_{j\in\mathcal{J},t\in\mathcal{T}}$,
$\boldsymbol z\triangleq [z_{j}(t)]_{j\in\mathcal{J},t\in\mathcal{T}}$,
$\boldsymbol{\omega}\triangleq [w_{ij}^\text{u}(t),w_{ij}^\text{d}(t)]_{i\in\mathcal{I},j\in\mathcal{J},t\in\mathcal{T}}$, and
$\boldsymbol f\triangleq [f_{ij}^\text{e}(t)]_{i\in\mathcal{I},j\in\mathcal{J},t\in\mathcal{T}}$.

Problem $\mathcal{P}1$ is a non-convex optimization problem due to the presence of binary decisions $x_{ij}(t)$, $y_{j}(t)$, and $z_{j}(t)$. $x_{ij}(t)$, $y_{j}(t)$, and $z_{j}(t)$ are coupled within each time slot $t$ as shown in  (\ref{cons:S}), (\ref{cons:zy1}), (\ref{cons:zy2}), and (\ref{cons:xy}). They are also temporally coupled as shown in (\ref{eq:price}) and (\ref{cons:aoi}). 
Problem $\mathcal{P}1$ is NP-hard, as illustrated below. Consider a simplified version of problem $\mathcal{P}1$, where $p_j(t)=0$ and constraint (\ref{cons:aoi}) is ignored. Then the simplified problem $\mathcal{P}1$ reduces to minimizing cost $C(t)$ at each time slot. Minimizing cost $C(t)$ at a time slot is a mixed-integer nonlinear programming problem, which is NP-hard. Consequently, the original problem $\mathcal{P}1$ is also NP-hard.

\section{Problem Decoupling with Lyapunov-Based Online Framework}\label{Sec:Lyapunov-Based Online Framework}


In this section, to solve the challenging optimization problem $\mathcal{P}1$,
we first simplify problem $\mathcal{P}1$ and decouple it into two independent subproblems.
The first subproblem is convex and can be solved via Lagrange multiplier method.
The second subproblem is NP-hard.
Then we utilize Lyapunov optimization technology to decouple the NP-hard problem at temporal level into a series of per-time-slot subproblems.


\subsection{Problem Simplification}
From (\ref{e:ct_expression}), $C(t)$ can be expressed as
\begin{align}
C(t)=&\sum_{i\in \mathcal{I}}\sum_{j\in \mathcal{J}}\{\lambda_D [D_{ij}^\text{L}(t)x_{ij}(t)+D_{ij}^\text{O}(t)(1-x_{ij}(t))]\nonumber\\
&+\lambda_c \lambda_s S^\text{u}_{ij}(t)(1-x_{ij}(t))\}+\sum_{j\in \mathcal{J}}\lambda_p p_{j}(t)y_{j}(t)\nonumber\\
=& C'(t)-U(t),\nonumber
\end{align}
where
\begin{align}
C'(t)=\sum_{i\in\mathcal{I}}\sum_{j\in\mathcal{J}}[\lambda_D D_{ij}^\text{O}(t)+\lambda_c\lambda_s S^\text{d}_{ij}(t)]\nonumber
\end{align}
with $D_{ij}^\text{O}(t)=\frac{S_{ij}^\text{u}(t)}{\eta^\text{u}_i(t) w^\text{u}_{ij}(t)}+D_{ij}^\text{ec}(t)+D_{ij}^\text{c}(t)+\frac{S_{ij}^\text{d}(t)}{\eta^\text{d}_i(t) w^\text{d}_{ij}(t)}$ from (\ref{eq:delay_CS}), and $U(t)$ is the {\it system utility}, expressed as
\begin{align}
U(t)=&\sum_{i\in\mathcal{I}}\sum_{j\in\mathcal{J}} [\lambda_D (D_{ij}^\text{O}(t) - D_{ij}^\text{L}(t))x_{ij}(t)\nonumber\\
&+\lambda_c \lambda_s S^\text{u}_{ij}(t)x_{ij}(t)]-\sum_{j\in \mathcal{J}}\lambda_p p_{j}(t)y_{j}(t)\nonumber\\
=&\sum_{i\in\mathcal{I}}\sum_{j\in\mathcal{J}} [\lambda_D(D_{ij}^\text{ec}(t)+D_{ij}^\text{c}(t)-\frac{Y_{ij}(t)}{f^\text{e}_{ij}(t)})x_{ij}(t)\nonumber\\
&+\lambda_c \lambda_s S^\text{u}_{ij}(t)x_{ij}(t)]-\sum_{j\in \mathcal{J}}\lambda_p p_{j}(t)y_{j}(t)\nonumber
\end{align}
with the second equality coming from (\ref{eq:delay_ES}) and (\ref{eq:delay_CS}).
It is seen that $C'(t)$ is a function of $w_{ij}^\text{u}(t)$ and $w_{ij}^\text{d}(t)$, and is independent of $x_{ij}(t)$, $y_j(t)$, $z_j(t)$, and $f^\text{e}_{ij}(t)$, while $U(t)$ is a function of  $x_{ij}(t)$, $y_j(t)$, $z_j(t)$, and $f^\text{e}_{ij}(t)$, and is independent of $w_{ij}^\text{u}(t)$ and $w_{ij}^\text{d}(t)$. Thus, problem $\mathcal{P}1$ can be simplified into two subproblems as follows:
\begin{align}
\mathcal{P}1(a):~
&\min_{\boldsymbol{\omega}}~\lim\limits_{T\to \infty}\frac{1}{T}\sum_{t=0}^{T-1} \mathbb{E}[C'(t)]\nonumber\\
\text{s.t.}~~
&(\ref{cons:Wud}),(\ref{cons:wij}).\notag
\end{align}
\begin{align}
\mathcal{P}1(b):~
&\max_{\boldsymbol x,\boldsymbol y, \boldsymbol z, \boldsymbol f}~\lim\limits_{T\to \infty}\frac{1}{T}\sum_{t=0}^{T-1} \mathbb{E}[U(t)] \nonumber\\
\text{s.t.}~~
&(\ref{cons:S}),(\ref{cons:zy1}),(\ref{cons:zy2}),(\ref{cons:xy}),(\ref{cons:F}),(\ref{cons:aoi}),(\ref{cons:xyz}),(\ref{cons:fij}).\notag
\end{align}

Problem $\mathcal{P}1(a)$ is only related to bandwidth resource allocation and is not temporally coupled. 
Therefore, problem $\mathcal{P}1(a)$ is equivalent to minimizing $C'(t)$ at each time slot, shown in the following (by ignoring terms $D_{ij}^\text{ec}(t)$, $D_{ij}^\text{c}(t)$, and $\lambda_c\lambda_s S^\text{d}_{ij}(t)$ since they are independent of $w_{ij}^\text{u}(t)$ and $w_{ij}^\text{d}(t)$):
\begin{align}
\min_{\boldsymbol{\omega}(t)}~&\sum_{i\in \mathcal{I}}\sum_{j\in\mathcal{J}} \frac{S_{ij}^\text{u}(t)}{\eta^\text{u}_i(t) w^\text{u}_{ij}(t)}+\frac{S_{ij}^\text{d}(t)}{\eta^\text{d}_i(t) w^\text{d}_{ij}(t)}\label{problem:P1(a)}\\
\text{s.t.}~~
&(\ref{cons:Wud}),(\ref{cons:wij}),\notag
\end{align}
where $\boldsymbol{\omega}(t)\triangleq [w_{ij}^\text{u}(t),w_{ij}^\text{d}(t)]_{i\in\mathcal{I},j\in\mathcal{J}}$.
The problem in (\ref{problem:P1(a)}) is convex. Based on Lagrange multiplier method, the solutions to (\ref{problem:P1(a)}) and problem $\mathcal{P}1(a)$ can be obtained in closed form as follows: 
\begin{align}
w^\text{u}_{ij}(t)&=\frac{\sqrt{S_{ij}^\text{u}(t)/\eta^\text{u}_i(t)}}{\sum_{i\in\mathcal{I}}\sum_{j\in\mathcal{J}}\sqrt{S_{ij}^\text{u}(t)/\eta^\text{u}_i(t)}}W^\text{u},\label{sol_wu}\\
w^\text{d}_{ij}(t)&=\frac{\sqrt{S_{ij}^\text{d}(t)/\eta^\text{d}_i(t)}}{\sum_{i\in\mathcal{I}}\sum_{j\in\mathcal{J}}\sqrt{S_{ij}^\text{d}(t)/\eta^\text{d}_i(t)}}W^\text{d}.\label{sol_wd}
\end{align}

Problem $\mathcal{P}1(b)$ is still non-convex and NP-hard, to be investigated in the subsequent subsection.


\subsection{Lyapunov-Based Online Framework for Problem $\mathcal{P}1(b)$}\label{Sec:Lyapunov}
We aim to present an online solution for problem $\mathcal{P}1(b)$. To this end, we resort to Lyapunov optimization technology \cite{Lyapunov}. First, the long-term average AoI constraint (\ref{cons:aoi}) is transformed into a queue stability constraint.
Specifically, for each service type $j\in \mathcal{J}$, we construct a virtual queue. 
The input of the virtual queue at time slot $t$ is $A^\text{e}_j(t)$ and the output of the virtual queue at every time slot is $A_j^{\max}$.
Denote $Q_j(t)$ as the \textit{queue backlog} at time slot $t$. Then, the virtual queue is updated at time slot $t+1$ as follows:
\begin{align}\label{eq:Q}
Q_j(t+1)=\max[Q_j(t)-A_j^{\max}+A^\text{e}_j(t),0],~\forall j.
\end{align}
It is seen from \cite{Lyapunov} that if the above virtual queue $Q_j(t)$ is {\it strong stable},\footnote{The queue $Q_j(t),~j\in\mathcal{J}$ is strong stable if
$\lim\limits_{T\to \infty}\frac{1}{T}\sum_{t=0}^{T-1}\mathbb{E}[Q_j(t)]<\infty,~\forall j$ \cite{Lyapunov} which indicates the long-term average backlog of $Q_j(t)$ is finite.\label{footnote:stable}} constraint (\ref{cons:aoi}) for service type $j$ is satisfied. 
Therefore, problem $\mathcal{P}1(b)$ is equivalent to maximizing the utility $\lim\limits_{T\to \infty}\frac{1}{T}\sum_{t=0}^{T-1} \mathbb{E}[U(t)]$, subject to strong stable virtual queues for $j\in\mathcal{J}$ and constraints (\ref{cons:S}), (\ref{cons:zy1}), (\ref{cons:zy2}), (\ref{cons:xy}), (\ref{cons:F}), (\ref{cons:xyz}), (\ref{cons:fij}). 
However, the equivalent problem is still temporally coupled, and thus, more math manipulations based on Lyapunov optimization technology are needed, as follows.

To stabilize the virtual queue, we construct the \textit{Lyapunov function} and \textit{conditional Lyapunov drift}.

\textit{Lyapunov Function.} 
Define the Lyapunov function at time slot $t$ as 
\begin{align}\label{eq:Lyapunov_function}
L(t)=\frac{1}{2}\sum_{j\in\mathcal{J}}Q^2_j(t),
\end{align}
which implies the size of the queue backlogs of all services. A small $L(t)$ indicates that all queues have small backlogs, further suggesting that all services provided are fresh (since a small backlog of $Q_j(t)$ means that the input of the virtual queue is largely less then the output). Therefore, by keeping $L(t)$ small, the freshness of services can be improved.

\textit{Conditional Lyapunov Drift.} 
Denote the conditional Lyapunov drift at time slot $t$ as $\Delta(t)$ and let $\boldsymbol{O}(t)\triangleq [Q_j(t),A_j^\text{c}(t),A_j^\text{e}(t-1)]_{j\in\mathcal{J}}$ denote the \textit{AoI observation} at time slot $t$. The conditional Lyapunov drift is constructed as
\begin{align}\label{eq:drift}
\Delta(t)&=\mathbb{E}[L(t+1)-L(t)|\boldsymbol{O}(t)],
\end{align}
which indicates the expected change of Lyapunov function from time slot $t$ to time slot $t+1$ given current AoI observation $\boldsymbol{O}(t)$.
Small $\Delta(t)$ will contribute to stabilizing the queues and satisfying constraint (\ref{cons:aoi}), so that the service freshness can be kept at a low level. 

Next, to consider both queue stability (by keeping $\Delta(t)$ small) and utility maximization (by keeping $\mathbb{E}[U(t)|\boldsymbol{O}(t)]$ large), we construct the \textit{Lyapunov drift-plus-penalty function}.

\textit{Lyapunov Drift-Plus-Penalty Function. } Define the Lyapunov drift-plus-penalty function at time slot $t$ as
\begin{align}\label{eq:Delta_V}
\Delta_V(t)=\Delta(t)-V\mathbb{E}[U(t)|\boldsymbol{O}(t)],
\end{align}
where $V\geq 0$ is the weight parameter used to tune the importance of the utility and a larger value of $V$ means that the utility is more emphasized. The case of $V=0$ means that we only consider the service freshness but ignore the system utility, while $V>0$ means that we consider both.

To maintain the queue stability and maximize utility, we turn to minimize $\Delta_V(t)$ to obtain the solution of problem $\mathcal{P}1(b)$.
However, due to the $\max[\cdot]$ function in the expression of $Q_j(t)$ in (\ref{eq:Q}), it is difficult to minimize $\Delta_V(t)$ straightforwardly.
Thus we turn to minimize an upper bound of $\Delta_V(t)$ as follows.

From the definition of $Q_j(t)$ in (\ref{eq:Q}), it can be derived that
\begin{align}\label{ineq:Q}
Q^2_j(t+1)\leq &Q^2_j(t)+(A_j^{\max})^2+(A^\text{e}_j(t))^2 \nonumber\\
&+2Q_j(t)(A^\text{e}_j(t)-A_j^{\max}),~\forall j.
\end{align}
Then, according to (\ref{eq:Lyapunov_function}), (\ref{eq:drift}), and (\ref{ineq:Q}), we have
\begin{align}\label{ineq:drift}
\Delta(t)\leq&\frac{1}{2}\sum_{j\in\mathcal{J}}\{(A_j^{\max})^2+\mathbb{E}[(A^\text{e}_j(t))^2|\boldsymbol{O}(t)] \nonumber\\
&+2Q_j(t)\mathbb{E}[A^\text{e}_j(t)|\boldsymbol{O}(t)]-2Q_j(t)A_j^{\max}\}.
\end{align}
From (\ref{eq:AoI_edge}), we have
\begin{align}
A^\text{e}_j(t)=&A_j^\text{c}(t)z_j(t)y_j(t)+(A_j^\text{e}(t-1)+1)z_j(t)(1-y_j(t)) \nonumber\\
&+A_j^\text{c}(t)(1-z_j(t))(1-y_j(t)) \nonumber\\
\overset{(\text{i})}{=}&A_j^\text{c}(t)+(A_j^\text{e}(t-1)+1-A_j^\text{c}(t))(z_j(t)-y_j(t)), \nonumber
\end{align}
where step $(\text{i})$ is from the fact that $z_j(t)y_j(t)=y_j(t)$.
Similarly, we have 
\begin{align}
&(A^\text{e}_j(t))^2\nonumber\\
&=(A_j^\text{c}(t))^2+\alpha^2(z_j(t)-y_j(t))^2+2A_j^\text{c}(t)\alpha(z_j(t)-y_j(t))\nonumber\\
&\overset{(\text{ii})}{=}(A_j^\text{c}(t))^2+[(A_j^\text{e}(t-1)+1)^2-(A_j^\text{c}(t))^2](z_j(t)-y_j(t))\nonumber
\end{align}
where $\alpha\triangleq A_j^\text{e}(t-1)+1-A_j^\text{c}(t)$, and step $(\text{ii})$ is from the fact that $y^2_j(t)=y_j(t)$, $z_j(t)y_j(t)=y_j(t)$, and$z^2_j(t)=z_j(t)$.

Substituting the above expressions of $A^\text{e}_j(t)$ and $(A^\text{e}_j(t))^2$ into the right hand side of (\ref{ineq:drift}) and according to (\ref{eq:Delta_V}), we obtain an upper bound of $\Delta_V(t)$ as
\begin{align}\label{ineq:Delta_V}
\Delta_V(t)\leq&\sum_{j\in\mathcal{J}}\mathbb{E}\{H_j(t) (z_j(t)- y_j(t))-\sum_{i\in\mathcal{I}} V[\lambda_D(D_{ij}^\text{ec}(t)\nonumber\\
&+D_{ij}^\text{c}(t)-\frac{Y_{ij}(t)}{f^\text{e}_{ij}(t)})+\lambda_c \lambda_s S^\text{u}_{ij}(t)]x_{ij}(t)\nonumber\\
&+V\lambda_p p_{j}(t)y_{j}(t)|\boldsymbol{O}(t)\}+\sum_{j\in\mathcal{J}}I_j(t),
\end{align}
where $H_j(t)\triangleq\frac{(A_j^\text{e}(t-1)+1)^2-(A_j^\text{c}(t))^2}{2}+Q_j(t)(A_j^\text{e}(t-1)+1-A_j^\text{c}(t))$ and
$I_j(t)\triangleq\frac{(A_j^{\max})^2+(A_j^\text{c}(t))^2}{2}-Q_j(t)(A_j^{\max}-A_j^\text{c}(t))$.

For opportunistically minimizing an expectation \cite{Lyapunov}, minimizing the upper bound of $\Delta_V(t)$ given in (\ref{ineq:Delta_V}) is equivalent to minimizing $\sum_{j\in\mathcal{J}}\{H_j(t) (z_j(t)- y_j(t))-\sum_{i\in\mathcal{I}} V[\lambda_D(D_{ij}^\text{ec}(t)+D_{ij}^\text{c}(t)-\frac{Y_{ij}(t)}{f^\text{e}_{ij}(t)})+\lambda_c \lambda_s S^\text{u}_{ij}(t)]x_{ij}(t)+V\lambda_p p_{j}(t)y_{j}(t)+I_j(t)\}$\footnote{It is the right-side term of (\ref{ineq:Delta_V}) by removing the expectation operation.} given AoI observation $\boldsymbol{O}(t)$ at the beginning of a time slot at each time slot. After removing $I_j(t)$ (since $I_j(t)$ is independent of decision variables $x_{ij}(t)$, $y_{j}(t)$, $z_{j}(t)$, $f_{ij}^e(t)$ in problem $\mathcal{P}1(b)$), the optimization problem at time slot $t$ is given by
\begin{align}
\mathcal{P}2:~
&\min_{\boldsymbol x(t),\boldsymbol y(t), \boldsymbol z(t), \boldsymbol f(t)}~g_1(\boldsymbol z(t))+g_2(\boldsymbol y(t))-g_3(\boldsymbol x(t), \boldsymbol f(t)) \nonumber\\
\text{s.t.}~~
&(\ref{cons:S}), (\ref{cons:zy1}), (\ref{cons:zy2}), (\ref{cons:xy}), (\ref{cons:F}), (\ref{cons:xyz}), (\ref{cons:fij}),\notag
\end{align}
where
\begin{align}
g_1(\boldsymbol z(t))\triangleq&\sum_{j\in\mathcal{J}}H_j(t)z_j(t),\nonumber\\
g_2(\boldsymbol y(t))\triangleq&\sum_{j\in \mathcal{J}}(V\lambda_p p_j(t)-H_j(t))y_{j}(t),\nonumber\\
g_3(\boldsymbol x(t), \boldsymbol f(t))\triangleq&\sum_{i\in \mathcal{I}}\sum_{j\in \mathcal{J}} V[\lambda_D(D_{ij}^\text{ec}(t)+D_{ij}^\text{c}(t)-\frac{Y_{ij}(t)}{f^\text{e}_{ij}(t)})\nonumber\\
&+\lambda_c\lambda_s S^\text{u}_{ij}(t)]x_{ij}(t),\nonumber
\end{align}
and 
$\boldsymbol x(t)\triangleq [x_{ij}(t)]_{i\in\mathcal{I},j\in\mathcal{J}}$,
$\boldsymbol y(t)\triangleq [y_{j}(t)]_{j\in\mathcal{J}}$,
$\boldsymbol z(t)\triangleq [z_{j}(t)]_{j\in\mathcal{J}}$,
$\boldsymbol f(t)\triangleq [f_{ij}^\text{e}(t)]_{i\in\mathcal{I},j\in\mathcal{J}}$.

So now we have decoupled the original problem $\mathcal{P}1(b)$ at temporal level into problem $\mathcal{P}2$.
Worth noting that solving problem $\mathcal{P}2$ does not require the future information  (e.g., future task arrival, future service information), which leads to the online optimization design.

Lastly, we prove that the gap between solutions of problem $\mathcal{P}1(b)$ and problem $\mathcal{P}2$ is bounded, as shown next in Theorem \ref{theorem_tradeoff}.
To illustrate below, let $a^{\max}\triangleq \max{A^\text{e}_j(t),~\forall j, t}$ and $B\triangleq\sum_{j\in\mathcal{J}}\frac{(a^{\max})^2+(A_j^{\max})^2}{2}$.



\begin{assumption}\label{Assumption}
We assume that problem $\mathcal{P}1(b)$ follows the Slater condition \cite{Lyapunov}, which means that there is a network-only algorithm\footnote{A network-only algorithm makes offloading decision, downloading decision, caching decision, and computing resource allocation decision at each time slot only according to current task arrival and service information.} and variables $U_{\epsilon}\geq 0$ and $\epsilon>0$ satisfy
\begin{align}
&\mathbb{E}[U^\text{n}(t)]=U_{\epsilon}, \label{Ass:U}\\
&\mathbb{E}[A^\text{n}_j(t)]+\epsilon \leq A_j^{\max},~\forall j, \label{Ass:A}
\end{align}
where $U^\text{n}(t)$ and $A^\text{n}_j(t)$ are the system utility and service $j$'s AoI at time slot $t$ obtained by a network-only algorithm. 
\end{assumption}

\begin{theorem}\label{theorem_tradeoff}
Denote the utility obtained by solving problem $\mathcal{P}2$ for time slot $t$ as $U^*(t)$ and let $U^*\triangleq\frac{1}{T}\sum_{t=1}^T\mathbb{E}[U^*(t)]$.
The upper bound of the gap between the optimal long-term average utility of problem $\mathcal{P}$1(b) (denoted $U^\text{opt}$) and $U^*$ is
\begin{align}
U^\text{opt}-U^*\leq\frac{B}{V},
\end{align}
which shows that the gap can be arbitrarily small by choosing the $V$ value.
With the solution of problem $\mathcal{P}2$, denote $Q^*_j(t)$ as the queue backlog for service type $j$ at time slot $t$.
The upper bound on long-term average queue backlogs of all services is
\begin{align}
\lim\limits_{T\to \infty}\frac{1}{T}\sum_{t=0}^{T-1}\sum_{j\in \mathcal{J}}\mathbb{E}[Q^*_j(t)]\leq \frac{B+V (U^{opt}-U_{\epsilon})}{\epsilon},
\end{align}
which shows that the long-term average queue backlogs of all services increase at most linearly as $V$ rises.
\end{theorem}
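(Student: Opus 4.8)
The plan is the standard drift-plus-penalty analysis of Lyapunov optimization (\cite{Lyapunov}). The starting point is the bound on $\Delta_V(t)$ in (\ref{ineq:Delta_V}), together with the observation that the objective of problem $\mathcal{P}2$ is precisely the decision-dependent part of the right-hand side of (\ref{ineq:Delta_V}) once the conditional expectation and the decision-independent constant $I_j(t)$ are stripped off. Hence the online policy that solves $\mathcal{P}2$ at every slot performs an opportunistic minimization of that bound: for the $\mathcal{P}2$-policy, the right-hand side of (\ref{ineq:Delta_V}) is no larger than the same expression evaluated under \emph{any} competing policy $\pi$ whose per-slot decisions satisfy (\ref{cons:S}), (\ref{cons:zy1}), (\ref{cons:zy2}), (\ref{cons:xy}), (\ref{cons:F}), (\ref{cons:xyz}), (\ref{cons:fij}). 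Recalling that this right-hand side is, by construction, the exact version of the bound in (\ref{ineq:drift}) minus $V\,\mathbb{E}[U^{\pi}(t)\,|\,\boldsymbol{O}(t)]$, and using $A_j^\text{e}(t)\le a^{\max}$ to replace $\mathbb{E}[(A_j^\text{e}(t))^2\,|\,\boldsymbol{O}(t)]$ by the constant $(a^{\max})^2$ so that the constants collect into $B$, one obtains, for every feasible $\pi$,
\begin{align}
\Delta_V(t)\le B+\sum_{j\in\mathcal{J}}Q_j(t)\bigl(\mathbb{E}[A_j^{\text{e},\pi}(t)\,|\,\boldsymbol{O}(t)]-A_j^{\max}\bigr)-V\,\mathbb{E}[U^{\pi}(t)\,|\,\boldsymbol{O}(t)].\nonumber
\end{align}

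For the utility gap, I take $\pi$ to be the optimal stationary randomized policy for problem $\mathcal{P}1(b)$, whose existence is guaranteed by the standard argument in \cite{Lyapunov}: its decisions are independent of the AoI observation $\boldsymbol{O}(t)$, it achieves average utility arbitrarily close to $U^\text{opt}$, and it keeps $\mathbb{E}[A_j^\text{e}(t)]\le A_j^{\max}$ for all $j$. The $Q_j(t)$-weighted term is then nonpositive, so $\Delta_V(t)\le B-VU^\text{opt}$ up to a slack that vanishes in the limit. Taking total expectations, using iterated expectations on $\Delta_V(t)=\mathbb{E}[L(t+1)-L(t)\,|\,\boldsymbol{O}(t)]-V\,\mathbb{E}[U(t)\,|\,\boldsymbol{O}(t)]$, summing over $t=0,\dots,T-1$, and telescoping with $L(T)\ge 0$ and $Q_j(0)=0$, I get $\frac1T\sum_{t}\mathbb{E}[U^*(t)]\ge U^\text{opt}-B/V$; letting $T\to\infty$ yields $U^\text{opt}-U^*\le B/V$.

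For the queue-backlog bound, I instead take $\pi$ to be the network-only policy of Assumption~\ref{Assumption}, whose decisions are also $\boldsymbol{O}(t)$-independent and which satisfies $\mathbb{E}[U^\text{n}(t)]=U_\epsilon$ and $\mathbb{E}[A_j^\text{n}(t)]-A_j^{\max}\le-\epsilon$. Substituting it into the displayed bound gives $\Delta_V(t)\le B-\epsilon\sum_{j\in\mathcal{J}}Q_j(t)-VU_\epsilon$. Taking total expectations, summing over $t=0,\dots,T-1$, telescoping, dropping $-\mathbb{E}[L(T)]\le 0$, and using $L(0)=0$, I arrive at $\epsilon\sum_{t=0}^{T-1}\sum_{j\in\mathcal{J}}\mathbb{E}[Q_j^*(t)]\le TB+V\sum_{t}\mathbb{E}[U^*(t)]-TVU_\epsilon$; bounding the time-average utility of the online policy by $U^\text{opt}$, dividing by $T\epsilon$, and letting $T\to\infty$ gives $\lim_{T\to\infty}\frac1T\sum_{t}\sum_{j}\mathbb{E}[Q_j^*(t)]\le \frac{B+V(U^\text{opt}-U_\epsilon)}{\epsilon}$.

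The step I expect to be the real obstacle is the one usually left implicit: establishing that $U^\text{opt}$ for $\mathcal{P}1(b)$ is approached by stationary randomized policies that are both per-slot feasible and independent of the virtual-queue backlogs, so that in the displayed bound the conditional expectations of the comparison policy collapse to unconditional ones and the $Q_j(t)$-weighted term can be signed. This rests on the Slater regularity in Assumption~\ref{Assumption} and a Carath\'eodory-type selection argument over the (task-arrival, service-information) state space. A minor additional point is the justification that the finite-horizon time-average utility of the online policy does not exceed $U^\text{opt}$, needed to close the queue-backlog inequality; it follows from the boundedness of the per-slot utility together with the optimality of $U^\text{opt}$.
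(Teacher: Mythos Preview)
Your proposal is correct and follows precisely the standard drift-plus-penalty argument of \cite[Theorem~4.2]{Lyapunov}; the paper itself does not give an independent proof but simply cites that result and omits the details, so your write-up is in fact more explicit than what the paper provides. The two technical caveats you flag (existence of a queue-independent stationary randomized policy attaining $U^{\text{opt}}$, and the inequality $U^{*}\le U^{\text{opt}}$ used to close the queue bound) are exactly the points Neely handles in his Theorem~4.2 and surrounding discussion, so nothing further is needed.
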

\begin{proof}
The proof is inspired by the proof of \cite[Theorem 4.2]{Lyapunov} and we omit the proof.
\end{proof}

\section{Online Integrated Optimization-DRL Method for Problem $\mathcal{P}2$}\label{Sec:algorithm}

Problem $\mathcal{P}2$ is temporally decoupled but still intractable.
Due to constrains (\ref{cons:zy1}), (\ref{cons:zy2}), and (\ref{cons:xy}), terms $g_1(\boldsymbol z(t))$, $g_2(\boldsymbol y(t))$, and $g_3(\boldsymbol x(t),\boldsymbol f(t))$ in problem $\mathcal{P}2$ are coupled to each other. Moreover, the caching decisions in $g_1(\boldsymbol z(t))$ are coupled due to the storage capacity constraint (\ref{cons:S}).
Similar to the analysis in Section \ref{Problem Formulation}, problem $\mathcal{P}2$ is NP-hard.

DRL stands as a classical method for addressing decision-making problems without knowledge of future information in dynamic environments \cite{9815021}. 
Consequently, the joint optimization problem $\mathcal{P}2$ in MEC networks with dynamic tasks and services probably might be able to be effectively solved via a DRL-based method.
However, applying a DRL algorithm directly may not lead to a satisfactory solution for problem $\mathcal{P}2$ due to the following reason.
Because of the dynamic size of service data and limited storage capacity of ES, there are a number of valid combinations of downloading and caching decisions and the combinations change over time, which poses a significant challenge on DRL algorithms to devise the optimal combination that effectively accommodates the storage constraint \cite{huang2020closer}. 

Therefore, we do not employ DRL algorithms to directly address downloading and caching decisions. Instead, we resort to the QCQP transformation and the SDR method to derive a rough idea of downloading and caching decisions.
With the QCQP transformation and SDR method, an effective solution with low-complexity can be provided as a rough idea.
Based on the rough idea, we use a DRL algorithm to obtain final decisions including offloading decision, downloading decision, caching decision, and computing resource allocation decision.
We refer to our proposed method as the online integrated optimization-DRL (OIODRL) for solving the temporally decoupled NP-hard problem $\mathcal{P}2$. This method comprises an optimization stage including the QCQP transformation and the SDR method, and a learning stage including the DRL algorithm.



\subsection{Optimization Stage: QCQP Transformation and SDR}
In the optimization stage, we first simplify problem $\mathcal{P}2$, and then transform it into a QCQP problem and then relax the QCQP problem into a semidefinite programming (SDP) problem via the SDR method for providing a rough idea of downloading and caching decisions.

For presentation simplicity, we omit the time symbol $t$ from notations such as $y_j(t)$, $z_j(t)$, and replace $z_j(t-1)$ with $z_j'$.
Besides, we define the notations below.
Let $\mathbf{e}_i$ denote an $I$-element unit column vector with the $i$-th
element being $1$ and other elements being $0$, and $\mathbf{G}_i$ denote an $I\times I$ matrix with the $i$-th diagonal element being $1$ and other elements being $0$.
Let $\mathbf{0}_{a\times b}$ denote an $a\times b$ matrix with all elements being $0$, and $\mathbf{1}_{a\times b}$ denote an $a\times b$ matrix with all elements being $1$.
We use $\mathbf{b}(n)$ to denote the $n$-th element of vector $\mathbf{b}$ and $\mathbf{B}(n,m)$ to denote the $(n,m)$-th entry of matrix $\mathbf{B}$.

\subsubsection{Problem Simplification}
We have the following lemma.
\begin{lemma}\label{lemma_cases}
In the optimal solution of problem $\mathcal{P}2$, we have the following three cases for any service $j\in\mathcal{J}$:
(\romannumeral1) when $z_j'=0$, then $y_j=z_j$;
(\romannumeral2) when $z_j'=1$ and $V\lambda_p p_j-H_j<0$, then $y_j=z_j$;
(\romannumeral3) when $z_j'=1$ and $V\lambda_p p_j-H_j\geq 0$, then $y_j=0$.
\end{lemma}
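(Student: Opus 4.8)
The plan is to analyze the three cases by examining how the objective of problem $\mathcal{P}2$ depends on $y_j$ once $z_j'$ and $z_j$ are fixed, and then invoke the constraints \eqref{cons:zy1}, \eqref{cons:zy2}, and the coefficient signs appearing in $g_1$ and $g_2$. The key observation is that in the objective $g_1(\boldsymbol z(t)) + g_2(\boldsymbol y(t)) - g_3(\boldsymbol x(t),\boldsymbol f(t))$, the variable $y_j$ appears only in $g_2$ through the single term $(V\lambda_p p_j - H_j)y_j$, and it is otherwise constrained only by its coupling with $z_j$. So, holding everything else fixed, the optimal $y_j$ is determined by minimizing $(V\lambda_p p_j - H_j)y_j$ subject to the feasibility constraints that relate $y_j$ to $z_j$ and $z_j'$.

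First, for case (\romannumeral1), when $z_j' = 0$, constraint \eqref{cons:zy1} directly forces $y_j = z_j$, so there is nothing to optimize — this case is immediate from feasibility alone. Second, for cases (\romannumeral2) and (\romannumeral3), when $z_j' = 1$, constraint \eqref{cons:zy2} gives $y_j \leq z_j$, so the admissible pairs $(y_j, z_j)$ are $(0,0)$, $(0,1)$, and $(1,1)$. I would fix $z_j$ (recalling $z_j$ is coupled to the rest of the problem via the storage constraint \eqref{cons:S} and the offloading constraint \eqref{cons:xy}, but $y_j$ is not) and ask: for a given $z_j$, what is the best $y_j$? If $z_j = 0$, then $y_j = 0$ is forced. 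If $z_j = 1$, then $y_j \in \{0,1\}$, and since the contribution of $y_j$ to the objective is $(V\lambda_p p_j - H_j)y_j$, the optimum is $y_j = 1$ when $V\lambda_p p_j - H_j < 0$ and $y_j = 0$ when $V\lambda_p p_j - H_j \geq 0$ (in the tie case $\geq 0$ we may take $y_j = 0$ without loss of optimality). Thus when $V\lambda_p p_j - H_j \geq 0$ we always have $y_j = 0$ in the optimal solution regardless of $z_j$, giving case (\romannumeral3); and when $V\lambda_p p_j - H_j < 0$, the optimal $y_j$ equals $1$ if $z_j = 1$ and equals $0$ if $z_j = 0$, i.e. $y_j = z_j$, giving case (\romannumeral2).

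The one point requiring a little care — and the main (mild) obstacle — is justifying that we may legitimately optimize $y_j$ "last," i.e. that changing $y_j$ to its preferred value while keeping $z_j$, $\boldsymbol x(t)$, $\boldsymbol f(t)$, and all other $y_{j'}$, $z_{j'}$ fixed never violates a constraint. This holds because among \eqref{cons:S}, \eqref{cons:zy1}, \eqref{cons:zy2}, \eqref{cons:xy}, \eqref{cons:F}, \eqref{cons:xyz}, \eqref{cons:fij}, the only ones involving $y_j$ are \eqref{cons:zy1} and \eqref{cons:zy2}, and both merely constrain $y_j$ relative to the already-fixed $z_j$ and $z_j'$; none of the other constraints sees $y_j$ at all. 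Hence the per-$y_j$ minimization is decoupled from everything else once $z_j$ is fixed, and an exchange/contradiction argument (if the claimed relation failed at an optimum, flipping $y_j$ would strictly decrease the objective without breaking feasibility, contradicting optimality) completes the proof. I would present this as a short contradiction argument for each of the three cases.
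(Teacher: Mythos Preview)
Your proposal is correct and follows essentially the same approach as the paper's proof: case (\romannumeral1) is immediate from constraint \eqref{cons:zy1}, and cases (\romannumeral2) and (\romannumeral3) follow by observing that, with $z_j'=1$, constraint \eqref{cons:zy2} leaves $y_j\in\{0,z_j\}$ and the sign of the coefficient $V\lambda_p p_j-H_j$ in $g_2$ determines the minimizing choice. Your explicit decoupling/exchange justification (that $y_j$ appears in no constraint other than \eqref{cons:zy1}--\eqref{cons:zy2}) is a welcome refinement that the paper leaves implicit, but the underlying argument is the same.
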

\begin{proof}
According to (\ref{cons:zy1}), we have case (\romannumeral1).
If $V\lambda_p p_j-H_j<0$, then downloading or refreshing service $j$ brings a positive gain to the objective function of problem $\mathcal{P}2$. 
Besides, due to constraint (\ref{cons:zy2}), service $j$ can be downloaded or refreshed (i.e., $y_j=1$) when and only when service $j$ is cached (i.e., $z_j=1$).
Thus, if service $j$ has been cached at the previous time slot (i.e., $z_j'=1$), then service $j$ should be removed (i.e., $y_j=0,z_j=0$) or should continuous to be cached while refreshed (i.e., $y_j=1,z_j=1$) to minimize the objective function in problem $\mathcal{P}2$, which yields case (\romannumeral2). 
Similar to case (\romannumeral2), if $V\lambda_p p_j-H_j\geq 0$, downloading or refreshing service $j$ would not bring a positive gain.
Thus, if service $j$ has been cached at the previous time slot (i.e., $z_j'=1$), then service $j$ should be removed (i.e., $y_j=0, z_j=0$) or should continuous to be cached while not refreshed (i.e., $y_j=0, z_j=1$), leading to case (\romannumeral3).
\end{proof}
Here, we consider the term $g_1(\boldsymbol z(t))+g_2(\boldsymbol y(t))$ in the objective function of problem $\mathcal{P}2$.
Specifically, if case (\romannumeral1) or (\romannumeral2) in Lemma \ref{lemma_cases} is satisfied, we have 
$g_1(\boldsymbol z(t))+g_2(\boldsymbol y(t))=\sum_{j\in\mathcal{J}}V\lambda_p p_j z_j$; if case (\romannumeral3) in Lemma \ref{lemma_cases} is satisfied, we have 
$g_1(\boldsymbol z(t))+g_2(\boldsymbol y(t))=\sum_{j\in\mathcal{J}}H_j z_j$.
Hence, $g_1(\boldsymbol z(t))+g_2(\boldsymbol y(t))$ can be replaced by $\sum_{j\in\mathcal{J}}\mathcal{G}_j z_j$, in which $\mathcal{G}_j$ is defined as 
\begin{equation}
\mathcal{G}_j=\nonumber
\begin{cases}
V\lambda_p p_j,&\text{if case (\romannumeral1) or (\romannumeral2) in Lemma \ref{lemma_cases} is satisfied},\\
H_j,&\text{if case (\romannumeral3) in Lemma \ref{lemma_cases} is satisfied}.
\end{cases}
\end{equation}
With this replacing, we can eliminate $y_j$ in problem $\mathcal{P}2$.

\subsubsection{QCQP Transformation}
To transform problem $\mathcal{P}2$ into a QCQP problem, we introduce an auxiliary variable denoted as $\bar D^\text{e}_{ij}$, and let $\boldsymbol{\bar D^\text{e}}(t)\triangleq[\bar D^\text{e}_{ij}]_{i\in\mathcal{I},j\in\mathcal{J}}$ and 
$\Lambda_{ij}\triangleq\lambda_D D_{ij}^\text{ec}+\lambda_DD_{ij}^\text{c}+\lambda_c\lambda_s S^\text{u}_{ij}$ for brevity. Then, problem $\mathcal{P}2$ is equivalent to problem $\mathcal{P}2.1$ as 
\begin{subequations}
\begin{align}
\mathcal{P}2.1:~
&\min_{\substack{\boldsymbol x(t),\boldsymbol z(t),\\ \boldsymbol f(t),\boldsymbol{\bar D^\text{e}}(t)}}~\sum_{j\in \mathcal{J}}[\mathcal{G}_j z_j-V(\sum_{i\in \mathcal{I}}\Lambda_{ij}x_{ij}-\lambda_D\bar D^\text{e}_{ij})]\nonumber\\
\text{s.t.}~~
&(\ref{cons:S}),(\ref{cons:xy}),(\ref{cons:F}),(\ref{cons:fij}),\notag\\
&x_{ij},z_{j} \in \{0,1\},~\forall i,j,\label{cons:xy_bi}\\
&\frac{Y_{ij}}{f^\text{e}_{ij}}x_{ij}\leq\bar D^\text{e}_{ij},~\forall i,j.\label{cons:aux}
\end{align}
\end{subequations}

Next, we rewrite the objective function of problem $\mathcal{P}2.1$ in the following vector form:
\begin{equation}\nonumber
\sum_{j\in \mathcal{J}}(\mathbf{b}_{j}^P)^T \mathbf{v}_{j},
\end{equation}
where $\mathbf{v}_{j}\triangleq [z_{j},x_{1j},...,x_{Ij},f_{1j},...,f_{Ij},\bar D^\text{e}_{1j},...,\bar D^\text{e}_{Ij}]^T,~\mathbf{b}^P_j\triangleq[\mathcal{G}_j,-V\Lambda_{1j},..., -V\Lambda_{Ij},\mathbf{0}_{1\times I},V\lambda_D\cdot\mathbf{1}_{1\times I}]^T$, and superscript $(\cdot)^T$ means transpose operation.

In problem $\mathcal{P}2.1$, constraint (\ref{cons:xy}) can be rewritten as
\begin{equation}\nonumber
(\mathbf{b}^{xz}_{i})^T \mathbf{v}_{j}\leq 0,~\forall i,j,
\end{equation}
where $\mathbf{b}^{xz}_{i}\triangleq [-1,\mathbf{e}_i^T,\mathbf{0}_{1\times 2I}]^T$,
and constraints (\ref{cons:S}) and (\ref{cons:F}) can be rewritten as
\begin{equation}\nonumber
\sum_{j\in \mathcal{J}}(\mathbf{b}^s_j)^T \mathbf{v}_{j}\leq S,~
\sum_{j\in \mathcal{J}}(\mathbf{b}^f_j)^T \mathbf{v}_{j}\leq F,
\end{equation}
where $\mathbf{b}^s_j\triangleq [S_j,\mathbf{0}_{3\times I}]^T,~\mathbf{b}^f_j\triangleq [\mathbf{0}_{1+1\times I},\mathbf{1}_{1\times I},\mathbf{0}_{1\times I}]^T$.
As for constraint (\ref{cons:xy_bi}), we rewrite it as $x_{ij}(1-x_{ij})=0,~z_{j}(1-z_{j})=0,~\forall i,j$, which is equivalent to 
\begin{equation}
(\mathbf{b}_{i}^x)^T\mathbf{v}_{j}-\mathbf{v}_{j}^T\mathbf{B}_{i}^x\mathbf{v}_{j}=0,~
(\mathbf{b}^z_{i})^T\mathbf{v}_{j}-\mathbf{v}_{j}^T\mathbf{B}^z_{i}\mathbf{v}_{j}=0,~\forall i,j,\nonumber
\end{equation}
where $\mathbf{b}^x_i\triangleq [0,\mathbf{e}_{i}^T,\mathbf{0}_{2\times I}]^T,~\mathbf{b}^z_{i}\triangleq [1,\mathbf{0}_{3\times I}]^T$,
\begin{align}
&\mathbf{B}^{x}_{i}\triangleq
\begin{bmatrix}\nonumber
0&\mathbf{0}_{1\times I}&\mathbf{0}_{1\times 2I}\\
\mathbf{0}_{I\times 1}&\mathbf{G}_i&\mathbf{0}_{I\times 2I}\\
\mathbf{0}_{2I\times 1}&\mathbf{0}_{2I\times I}&\mathbf{0}_{2I\times 2I}
\end{bmatrix},
~\mathbf{B}^{z}_{i}\triangleq
\begin{bmatrix}\nonumber
1&\mathbf{0}_{1\times 3I}\\
\mathbf{0}_{3I\times 1}&\mathbf{0}_{3I\times 3I}
\end{bmatrix}.
\end{align}
The auxiliary variable constraint (\ref{cons:aux}) is rewritten as
\begin{equation}\nonumber
(\mathbf{b}_{ij}^\text{e})^T\mathbf{v}_{j}-\mathbf{v}_{j}^T\mathbf{B}_{i}^\text{e}\mathbf{v}_{j}\leq 0,~\forall i,j,
\end{equation}
where $\mathbf{b}^\text{e}_{ij}\triangleq [0,Y_{ij}\mathbf{e}_i^T,\mathbf{0}_{2\times I}]^T$,
\begin{align}
&\mathbf{B}^{e}_{i}\triangleq
\begin{bmatrix}\nonumber
\mathbf{0}_{(I+1)\times (I+1)}&\mathbf{0}_{(I+1)\times I}&\mathbf{0}_{(I+1)\times I}\\
\mathbf{0}_{I\times (I+1)}&\mathbf{0}_{I\times I}&\frac{1}{2}\mathbf{G}_i\\
\mathbf{0}_{I\times (I+1)}&\frac{1}{2}\mathbf{G}_i&\mathbf{0}_{I\times I}
\end{bmatrix}.
\end{align}
By defining vector $\mathbf{u}_j\triangleq[\mathbf{v}_j^T,1]^T$, for $j\in\mathcal{J}$, problem $\mathcal{P}2.1$ is transformed into an equivalent QCQP problem $\mathcal{P}2.2$ as
\begin{subequations}\label{QCQP}
\begin{align}
\mathcal{P}2.2:~
&\min\limits_{\{\mathbf{u}_{j}\}}~\sum_{j\in \mathcal{J}}\mathbf{u}_{j}^T \mathbf{F}_{j}^P \mathbf{u}_{j}\nonumber\\
{\rm s.t.}~~
&\mathbf{u}_j^T\mathbf{F}^{xz}_j \mathbf{u}_j\leq 0,~\forall j,\\
&\sum_{j\in \mathcal{J}} \mathbf{u}_j^T \mathbf{F}^s_j \mathbf{u}_j\leq S,\\
&\sum_{j\in \mathcal{J}} \mathbf{u}_j^T \mathbf{F}^f_j \mathbf{u}_j\leq F,\\
&\mathbf{u}_j^T\mathbf{F}^\sigma_i \mathbf{u}_j=0,~\sigma\in\{x,z\},~\forall i,j,\label{cons:QCQP_binary}\\
&\mathbf{u}_j^T \mathbf{F}^\text{e}_{ij} \mathbf{u}_j\leq 0,~\forall i,j,\\
&\mathbf{u}_{j}\succeq 0,~\forall j,
\end{align}
\end{subequations}
where $\mathbf{F}_j^\rho\triangleq \nonumber
\begin{bmatrix}
\mathbf{0}_{(3I+1)\times (3I+1)}&\frac{1}{2}\mathbf{b}^\rho_j\\
\frac{1}{2}(\mathbf{b}^\rho_j)^T&0
\end{bmatrix},~\rho\in\{P,s,f\},$
\begin{align}
&\mathbf{F}_i^{xz}\triangleq \nonumber
\begin{bmatrix}
\mathbf{0}_{(3I+1)\times (3I+1)}&\frac{1}{2}\mathbf{b}^{xz}_i\\
\frac{1}{2}(\mathbf{b}^{xz}_i)^T&0
\end{bmatrix},
~\mathbf{F}_{ij}^\text{e}\triangleq \nonumber
\frac{1}{2}\begin{bmatrix}
-2\mathbf{B}_i^\text{e}&\mathbf{b}_{ij}^\text{e}\\
(\mathbf{b}_{ij}^\text{e})^T&0
\end{bmatrix},\\
&\mathbf{F}_i^\sigma\triangleq \nonumber
\frac{1}{2}\begin{bmatrix}
-2\mathbf{B}_i^\sigma&\mathbf{b}_i^\sigma\\
(\mathbf{b}_i^\sigma)^T&0
\end{bmatrix},~\sigma\in\{x,z\}.
\end{align}

\subsubsection{Semidefinite Relaxation}
Problem $\mathcal{P}2.2$ is still NP-hard.
To get an approximate solution, we adopt the SDR method.
In problem $\mathcal{P}2.2$, we have several notations in the format of 
$\mathbf{u}_{j}^T\mathbf{F}\mathbf{u}_{j}$ with $\mathbf{F}$ denote a $(3I+2)\times(3I+2)$ matrix.
By defining 
$\mathbf{U}_j\triangleq\mathbf{u}_j \mathbf{u}_j^T$, $\mathbf{u}_{j}^T\mathbf{F}\mathbf{u}_{j}$ can be rewritten as
\begin{equation}
\mathbf{u}_{j}^T\mathbf{F}\mathbf{u}_{j}={\rm Tr}(\mathbf{F}\mathbf{U}_{j}),\nonumber
\end{equation}
where ${\rm Tr}(\cdot)$ means trace operation and ${\rm rank}(\mathbf{U}_{j})=1$. Ignoring rank constraint ${\rm rank}(\mathbf{U}_{j})=1$, we relax problem $\mathcal{P}2.2$ to an SDP problem $\mathcal{P}3$ as
\begin{subequations}\label{SDP}
\begin{align}
\mathcal{P}3:~
&\min\limits_{\{\mathbf{U}_{j}\}}~\sum_{j\in \mathcal{J}}{\rm Tr}(\mathbf{F}_{j}^P\mathbf{U}_{j})\nonumber\\
{\rm s.t.}~~
&{\rm Tr}(\mathbf{F}_{j}^{xz}\mathbf{U}_{j})\leq 0,~\forall j,\label{cons:SDP_xz}\\
&\sum_{j\in \mathcal{J}}{\rm Tr}(\mathbf{F}_{j}^s\mathbf{U}_{j})\leq S,\\
&\sum_{j\in \mathcal{J}}{\rm Tr}(\mathbf{F}_{j}^f\mathbf{U}_{j})\leq F,\\
&{\rm Tr}(\mathbf{F}_{i}^\sigma\mathbf{U}_{j})=0,~\sigma\in\{x,z\},~\forall i,j,\label{cons:SDP_binary}\\
&{\rm Tr}(\mathbf{F}_{ij}^\text{e}\mathbf{U}_{j})\leq 0,~\forall i,j,\label{cons:SDP_e}\\
&{\rm Tr}(\mathbf{F}_{j}^z\mathbf{U}_{j})\leq 1,~\forall j,\label{cons:SDP_z}\\
&\mathbf{U}_{j}(3I+2,3I+2)=1,~\forall j,\\
&\mathbf{U}_{j}\succeq 0,~\forall j,\label{cons:SDP_semi}
\end{align}
\end{subequations}
where $\mathbf{F}_j^z\triangleq \nonumber
\frac{1}{2}\begin{bmatrix}
0&\mathbf{0}_{1\times 3I}&1\nonumber\\
\mathbf{0}_{3I\times 1}&\mathbf{0}_{3I\times 3I}&\mathbf{0}_{3I\times 1}\nonumber\\
1&\mathbf{0}_{1\times 3I}&0
\end{bmatrix}.$

We can efficiently solve problem $\mathcal{P}3$ in polynomial time using SDP optimization toolboxes, e.g., YALMIP, SeDuMi.
Denote the optimal solution of problem $\mathcal{P}3$ as $\{\mathbf{U}_{j}^\text{sdp}\}$.
Then, from the solution $\{\mathbf{U}_{j}^\text{sdp}\}$ of problem $\mathcal{P}3$ and Lemma \ref{lemma_cases}, we can obtain a rough idea of downloading and caching decisions for problem $\mathcal{P}2.2$ (also for problems $\mathcal{P}2$, $\mathcal{P}2.1$) as follows.

Specifically, problem $\mathcal{P}3$ is a relaxed version of problem $\mathcal{P}2.2$ since the rank constraint is discarded from problem $\mathcal{P}3$. Thus, the solution provided in problem $\mathcal{P}3$ may be infeasible in problem $\mathcal{P}2.2$ (also in problems $\mathcal{P}2$, $\mathcal{P}2.1$). 
From the definition $\mathbf{U}_j\triangleq\mathbf{u}_j \mathbf{u}_j^T$ and constraint $\mathbf{U}_{j}(3I+2,3I+2)=1$, we can obtain that the last row of $\mathbf{U}_j$ equals $\mathbf{u}_j$, i.e., $\mathbf{U}_j(3I+2,n)=\mathbf{u}_j(n),~\forall j\in\mathcal{J},n\in\{1,...,3I+2\}$.
Based on this, let $\mathbf{u}_j^\text{sdp}(n)\triangleq\mathbf{U}^\text{sdp}_j(3I+2,n),~\forall j,n$, and we can recover caching decision $z_j$ from $\mathbf{u}^\text{sdp}_j(1)=\mathbf{U}^\text{sdp}_j(3I+2,1)$ and then obtain downloading decision $y_j$ according to Lemma \ref{lemma_cases}.

\begin{remark}\label{lemma_relax}
In problem $\mathcal{P}2.2$ (and problems $\mathcal{P}2$, $\mathcal{P}2.1$), offloading decision $x_{ij}$ and caching decision $z_j$ are binary (i.e., $x_{ij},z_{j}\in\{0, 1\},~\forall i,j$). However, based on the solution $\{\mathbf{U}^\text{sdp}_j\}$ of problem $\mathcal{P}3$, the obtained solutions $x_{ij}$ and $z_j$ are continuous values with range $[0,1],~\forall i,j$. The reason is given below.
\end{remark}
From semidefinite constraint (\ref{cons:SDP_semi}) in problem $\mathcal{P}3$, matrix $\mathbf{U}_{j}^\text{sdp}$ is symmetric and $\mathbf{U}_j^\text{sdp}(n,n)\geq 0,~\forall j,n$.
Then, we have $\mathbf{u}_j^\text{sdp}(n)=\mathbf{U}_j^\text{sdp}(3I+2,n)=\mathbf{U}_j^\text{sdp}(n,3I+2),~\forall j,n$.
Due to the discarded rank constraint, constraint $\mathbf{u}_j(n)^2=\mathbf{U}_j(n,n)$ is removed, thus constraint (\ref{cons:SDP_binary}) only ensures that $\mathbf{u}_j^\text{sdp}(n)=\mathbf{U}_j^\text{sdp}(n,n)\geq 0,~\forall j,n$ rather than $\mathbf{u}_j^\text{sdp}(n)=\mathbf{u}_j^\text{sdp}(n)^2,~\forall j,n$, i.e., $x_{ij}$ and $z_{j}$ are non-negative but not necessarily binary.
Together with constraints (\ref{cons:SDP_xz}) and (\ref{cons:SDP_z}), we have $x_{ij}\leq z_j$ and $z_j\leq 1$, $\forall i,j$, and thus $x_{ij}, z_j\in[0,1], \forall i,j$.

Recall that we use the optimization stage to obtain a rough idea of caching and downloading decisions $z_j,y_j,~\forall j$. 
Since $y_j$ can be obtained from $z_j$ according to Lemma \ref{lemma_cases}, we only focus on $z_j$ here.
As the obtained solutions $z_j$'s are continuous values, we treat the obtained solution $z_j$ as the probability to cache service $j$.
According to the obtained $z_1,z_2,...,z_J$ (the probabilities of caching the $J$ services), we take $K$ samples of caching decisions. In each sample, we have $J$ binary caching decisions for the $J$ services, and further check the feasibility, i.e., whether or not the caching decisions violate the storage capacity constraint. If the sample is not feasible, we randomly select a cached service of this sample (e.g., service $j$ with $z_j=1$ in the sample) and remove it (i.e., set $z_j=0$ in the sample). We repeat the feasibility checking and remove until eventually the storage capacity constraint is satisfied. Then we get the corresponding downloading decisions for the sample.

As a summary, at the end of the optimization stage, we get $K$ samples, and in each sample, we have downloading and caching decisions for each service, which are denoted as $\widetilde{y}_j(t)$ and $\widetilde{z}_j(t)$. 
Besides, denote the $K$ samples of $\widetilde{y}_j(t)$ and $\widetilde{z}_j(t)$ for all services as $\boldsymbol{\widetilde{y}}(t)$ and $\boldsymbol{\widetilde{z}}(t)$.
$\boldsymbol{\widetilde{y}}(t)$ and $\boldsymbol{\widetilde{z}}(t)$ are our rough idea about downloading and caching decisions and are input to the next stage, i.e. the learning stage.

\subsection{Learning Stage: Clip-PPO2}
With the rough idea from the optimization stage, the system evolution over time slots can be modeled as a Markov decision process (MDP), which has four basic elements as follows:
\begin{itemize}
    \item\textit{Agent}: 
    In the considered MEC system, the ES is regarded as an agent. The agent continues interacting with the environment of system. The agent gets an observation of the  system state over each time slot and takes an action according to its policy. The system environment provides the agent with a reward after the action taking.
    Then the system state transits to next state at next time slot. The agent tries to learn the optimal policy for obtaining more rewards. 
    \item\textit{State space}: 
    The state describes the status of the stochastic and dynamic MEC system.
    Accordingly, we define the state at time slot $t$ as $s_t\triangleq[\boldsymbol{\xi}(t),\boldsymbol{\zeta}(t),\boldsymbol{z}(t-1),\boldsymbol{O}(t), \boldsymbol{\widetilde{y}}(t), \boldsymbol{\widetilde{z}}(t)]$, in which $\boldsymbol{\xi}(t)\triangleq[S^\text{u}_{ij}(t),S^\text{d}_{ij}(t),Y_{ij}(t)]_{i\in\mathcal{I},j\in\mathcal{J}}$ denotes the task information, $\boldsymbol{\zeta}(t)\triangleq[S_{j}(t),p_j^\text{p}(t),p_j^\text{r}(t)]_{j\in\mathcal{J}}$ denotes the service attribution, $\boldsymbol{z}(t-1)$ denotes the service caching status (i.e., the caching decisions of time slot $t-1$), $\boldsymbol{O}(t)$ denotes the AoI observation mentioned in Section \ref{Sec:Lyapunov}, and $\boldsymbol{\widetilde{y}}(t)$ and $\boldsymbol{\widetilde{z}}(t)$ are given by the optimization stage.
    

    \item\textit{Action space}:
    The action describes the agent's (ES's) decisions.
    The action at time slot $t$ is defined as $a_t\triangleq[\boldsymbol{x}(t),\boldsymbol{y}(t),\boldsymbol{z}(t),\boldsymbol{f}(t)]$, which includes offloading decision $\boldsymbol{x}(t)$, downloading decision $\boldsymbol{y}(t)$, caching decision $\boldsymbol{z}(t)$, and computing rate allocation decision $\boldsymbol{f}(t)$. 
    $a_t$ is the final decision that will be taken by the agent.
    

    \item\textit{Reward function}:
    The reward function evaluates the system's actions.
    Since the objective function of problem $\mathcal{P}2$ is to minimize the cost, we intuitively set the negative of problem $\mathcal{P}2$'s objective function as the reward. Thus, the reward after taking action $a_t$ under state $s_t$ at time slot $t$ is defined as $r(s_t,a_t)\triangleq-[g_1(\boldsymbol{z}(t))+g_2(\boldsymbol{y}(t))-g_3(\boldsymbol{x}(t),\boldsymbol{f}(t))]$.
    Recall that, $\boldsymbol{\widetilde{y}}(t)$ and $\boldsymbol{\widetilde{z}}(t)$ in state $s_t$ are the rough idea, while $\boldsymbol{y}(t)$ and $\boldsymbol{z}(t)$ in action $a_t$ are the final decisions. So the reward function is based on $\boldsymbol{y}(t)$ and $\boldsymbol{z}(t)$ rather than $\boldsymbol{\widetilde{y}}(t)$ and $\boldsymbol{\widetilde{z}}(t)$.
\end{itemize}


To solve problem $\mathcal{P}2$, we will solve the MDP problem by maximizing the reward.
To this end, DRL algorithms are powerful tools, such as deep Q-network (DQN), REINFORCE, advantage actor-critic (A2C), and proximal policy optimization (PPO).
DQN is not suitable for our problem since it is an off-policy algorithm which is prone to overestimation bias \cite{anschel2017averaged} especially in our stochastic and dynamic MEC system. Thus, DQN can lead to suboptimal policies and slow convergence.
REINFORCE, which is an on-policy and policy-based algorithm, may also be inappropriate to solve our problem, since it may suffer from high variance \cite{tucker2017rebar} when learning complex policies in our complex MEC system.
A2C is a combined value-based and policy-based algorithm and may be unstable in the training process especially in our MEC system with high-dimensional state space and complex dynamics.
On the other hand, PPO is more reliable than policy-based algorithms, and when using clipped objective functions, it can enhance stability, prevent significant performance degradation, and is easy for implementation \cite{schulman2017proximal,10322784}.
Therefore, we propose to use the clip-PPO2 algorithm (a kind of PPO algorithms) to solve the MDP problem in the learning stage of OIODRL.
For simplicity, PPO refers to clip-PPO2 in the sequel.


The adopted PPO algorithm is based on an actor-critic architecture \cite{schulman2017proximal}, containing three modules: \textit{actor network} that outputs a policy, \textit{critic network} that evaluates the policy, and \textit{updating network} that improves the actor and critic networks.

\subsubsection{Actor network}
The actor network is utilized to output the policy for the system actions.
The actor network comprises of a deep neural network (DNN), an action mask, and an optimization module, as follows. 
\begin{itemize}
    \item DNN: At each time slot $t$, the DNN receives the input, i.e., state $s_t$.\footnote{From $s_t$, the DNN only takes the task information $\boldsymbol{\xi}(t)$ and the rough idea of downloading and caching decisions $\boldsymbol{\widetilde{y}}(t),\boldsymbol{\widetilde{z}}(t)$, but does not take $\boldsymbol{\zeta}(t)$, $\boldsymbol{z}(t-1)$, $\boldsymbol{O}(t)$, since $\boldsymbol{\widetilde{y}}(t)$ and $\boldsymbol{\widetilde{z}}(t)$ already contain the information of $\boldsymbol{\zeta}(t)$, $\boldsymbol{z}(t-1)$, $\boldsymbol{O}(t)$.}
    Note that, in $s_t$, $\boldsymbol{\widetilde{y}}(t)$ and $\boldsymbol{\widetilde{z}}(t)$ includes the $K$ groups of rough ideas of downloading and caching decisions for all services. 
    Then the DNN performs forward propagation to output $K$ groups of the probability of offloading decisions. 
    To output the probability of offloading decisions, the sigmoid activation function is utilized in the output layer of DNN.
    Then the $K$ groups of the probability of offloading decisions are mapped to $K$ groups of binary offloading decisions by sampling the offloading decisions with the probability values.
    Thus, the DNN of the actor network can be expressed as $\pi_{\theta}:s_t\mapsto x_{ij}(t),\forall i, j$, where $\theta$ means the parameters of DNN. The DNN outputs totally $I$ binary offloading decisions for the $I$ users, since at each time slot, each user only requests one service type.  

    \item Action mask: 
    However, the offloading decision should satisfy constraint (\ref{cons:xy}), which means that the ES can process a task only when it has cached the associated service data.
    To guarantee constraint (\ref{cons:xy}) is always satisfied, the action mask is applied to the offloading decision as follows:
    \begin{align}
    x_{ij}(t) \leftarrow x_{ij}(t) \widetilde{z}_j(t),~\forall i,j. \nonumber
    \end{align}
    \item Optimization module: 
    The purpose of the optimization module is to obtain computing rate decisions $f^\text{e}_{ij}(t)$.
    Specifically, given the binary decisions $\widetilde{y}_{j}(t)$, $\widetilde{z}_{j}(t)$, $x_{ij}(t)$, we solve problem $\mathcal{P}2$ to get $f^\text{e}_{ij}(t)$ as follows.
    Since $g_1(\boldsymbol z(t))$ and $g_2(\boldsymbol y(t))$ in problem $\mathcal{P}2$ are independent of $f^\text{e}_{ij}(t)$ and only term $\frac{Y_{ij}(t)}{f^\text{e}_{ij}(t)}x_{ij}(t)$ in $g_3(\boldsymbol x(t), \boldsymbol f(t))$ is related to $f^\text{e}_{ij}(t)$, problem $\mathcal{P}2$ is equivalent to
    \begin{align}
    \min_{\boldsymbol f(t)}~&\sum_{i\in \mathcal{I}}\sum_{j\in\mathcal{J}}\frac{Y_{ij}(t)}{f^\text{e}_{ij}(t)}x_{ij}(t)\nonumber\\
    \text{s.t.}~~
    &(\ref{cons:F}),(\ref{cons:fij}),\notag
    \end{align}
    which is a convex optimization problem.
    By using the Lagrange multiplier method, a closed-form solution for the above problem can be obtained as 
    \begin{align}\label{eq:f_solution}
    f^\text{e}_{ij}(t)=\frac{\sqrt{Y_{ij}(t)x_{ij}(t)}}{\sum_{i\in\mathcal{I}}\sum_{j\in\mathcal{J}}\sqrt{Y_{ij}(t)x_{ij}(t)}}F.
    \end{align}
    Then, we obtain $K$ groups of decisions $\widetilde{y}_{j}(t)$, $\widetilde{z}_{j}(t)$, $x_{ij}(t)$, $f^\text{e}_{ij}(t)$.
    Then, we pick up the group that maximizes reward function $r(s_t,a_t)$ as the final action.
    Besides, with the rough idea given by the optimization stage and this optimization module, the DNN of actor network can reduce the number of output decisions from $2J+I+IJ$ decisions (for $\widetilde{y}_{j}(t)$, $\widetilde{z}_{j}(t)$, $x_{ij}(t)$, $f^\text{e}_{ij}(t)$) to $I$ decisions (only for $x_{ij}(t)$).
\end{itemize}
    
\subsubsection{Critic Network}

The critic network estimates the \textit{state value} of system states. Here for a state, the state value describes the expected subsequent cumulative reward of the system as evolving from the state.
The critic network consists of a DNN with the same structure as the actor network except for the output layer. The DNN receives the input state, say $s_t$, and outputs the estimated state value, which is defied as $v_{\phi}:s_t\mapsto v_{\phi}(s_t)$, and $\phi$ is the parameters of DNN. 

\subsubsection{Updating Network}
The updating network is utilized to train the DNNs' parameters in the actor and critic networks.
The training process is shown as follows.
The agent interacts with the environment continuously using the policy given by the actor network and wins the reward from the environment, and then the environment evolves to the next state.
During the interaction of each time slot, the corresponding state, action, reward, and next state are stored as a \textit{transition} in the \textit{replay buffer}.
After every $N$ episodes, we sample a batch of transitions from the replay buffer to update the networks.
Given the transitions, the critic network estimates the state value and calculates the \textit{advantage}, and the actor network outputs the new policy.
Specifically, the advantage characterizes whether the action is better or worse than the default behavior of the policy.
The generalized advantage estimation (GAE) \cite{schulman2015high} is adopted to construct the advantage function, which facilitates the variance reduction and training stability.
The advantage at time slot $t$ is denoted as $\hat{R}(t)$ and given by
\begin{align}\label{eq:advantage}
\hat{R}(t)=\sum_{l=0}^{T-1-t} (\gamma\lambda)^l \hat{\delta}(t+l),
\end{align}
where $\gamma\in[0,1]$ is the discount factor in the discounted cumulative reward, $\lambda\in[0,1]$ is the parameter that enables bias-variance tradeoff, and $\hat{\delta}(\cdot)$ is the temporal difference (TD) error \cite{TDerror} and given by
$\hat{\delta}(t)=r(s_t,a_t)+\gamma v_{\hat{\phi}}(s_{t+1}) - v_{\hat{\phi}}(s_t),$ 
where $\hat{\phi}$ denotes the DNN's parameters of last training round.

Based on this, the loss function is constructed to update the actor and critic networks.
We have the actor loss, denoted as $l_t^\text{act}(\theta)$, and the critic loss, denoted as $l_t^\text{cri}(\phi)$, respectively.

Specifically, the actor loss $l_t^\text{act}(\theta)$ contains the clip loss, denoted as $l_t^\text{clip}(\theta)$, and the entropy bonus, denoted as $l_t^\text{ent}(\theta)$.
Let $\pi_{\theta}(x_{ij}|s_t)$ denote the probability of outputting offloading decision $x_{ij}$ given state $s_t$ under policy $\pi_{\theta}$, and $\hat{\theta}$ denote the DNN's parameters of the actor network in the last training round.
We define the probability ratio as $\varphi_t(\theta)\triangleq\frac{\pi_{\theta}(x_{ij}|s_t)}{\pi_{\hat{\theta}}(x_{ij}|s_t)}$ that describes the ratio between the new policy and old policy, and thus $\varphi_t(\hat{\theta})=1$. Then, the clip loss $l_t^\text{clip}(\theta)$ is given by
\begin{align}\nonumber
l_t^\text{clip}(\theta)=\mathbb{E}[\min\{\varphi_t(\theta)\hat{R}(t),\text{clip}(\varphi_t(\theta),1-\epsilon_\pi,1+\epsilon_\pi)\hat{R}(t)\}],
\end{align}
where $\text{clip}(\varphi_t(\theta),1-\epsilon_\pi,1+\epsilon_\pi)$ is the clip function that truncates $\varphi_t(\theta)$ between the range $[1-\epsilon_\pi,1+\epsilon_\pi]$, and $\epsilon_\pi\in[0,1]$ is the hyperparameter that adjusts the clip range. 
The clip loss is designed to maximize the advantage while preventing the policy updates from drifting beyond a predetermined interval.
Next, the entropy bonus $l_t^\text{ent}(\theta)$ is defined as
\begin{align}\nonumber
l_t^\text{ent}(\theta)=-\sum_{x_{ij}\in\{0,1\},i\in\mathcal{I},j\in\mathcal{J}}
\mathbb{E}[\pi_{\theta}(x_{ij}|s_t)\ln\pi_{\theta}(x_{ij}|s_t)],
\end{align}
which is the entropy of the action distribution. The maximization of entropy bonus encourages policies to be more stochastic, promotes exploration, and prevents premature convergence to suboptimal policies.
Then, $l_t^\text{act}(\theta)$ is expressed as 
\begin{align}\label{eq:loss_actor}
l_t^\text{act}(\theta)=l_t^\text{clip}(\theta)+\beta l_t^\text{ent}(\theta),
\end{align}
where $\beta$ is the hyperparameter for the entropy bonus.

In addition, we also utilize a clip function to construct the critic loss $l_t^\text{cri}(\phi)$.
We denote the clipped state value under state $s_t$ as $v^{\text{clip}}_{\phi}(s_t)$, which is expressed as 
\begin{align}\nonumber
v^{\text{clip}}_{\phi}(s_t)=v_{\hat{\phi}}(s_t)+\text{clip}(v_{\phi}(s_t)-v_{\hat{\phi}}(s_t),-\epsilon_v,\epsilon_v),
\end{align}
where $\text{clip}(v_{\phi}(s_t)-v_{\hat{\phi}}(s_t),-\epsilon_v,\epsilon_v)$ limits the gap between the new state value and the old state value to be within range $(-\epsilon_v,\epsilon_v)$, and $\epsilon_v>0$ is the hyperparameter determining the clip range.
In this way, the variability can be reduced during the training of the critic network.
Then, we denote the target state value under state $s_t$ as $v^{\text{tar}}_{\hat{\phi}}(s_t)$, which is the sum of previously estimated state value and the advantage, and is given by
$v^{\text{tar}}_{\hat{\phi}}(s_t)=v_{\hat{\phi}}(s_t)+\hat{R}(t).$
For reducing the estimation error, we construct the critic loss $l_t^\text{cri}(\phi)$ as the following squared error loss: 
\begin{align}\label{eq:loss_critic}
l_t^\text{cri}(\phi)=&\mathbb{E}[\max\{(v_{\phi}(s_t)-v^{\text{tar}}_{\hat{\phi}}(s_t))^2,\nonumber\\&(v^{\text{clip}}_{\phi}(s_t)-v^{\text{tar}}_{\hat{\phi}}(s_t))^2\}]. 
\end{align}

Next, we perform multi-step stochastic gradient ascent for the actor loss $l_t^\text{act}(\theta)$ and multi-step stochastic gradient descent for the critic loss $l_t^\text{cri}(\phi)$, respectively.
After the training, the actor network can be applied to output offloading decision $\boldsymbol{x}(t)$, downloading decision $\boldsymbol{y}(t)$, caching decision $\boldsymbol{z}(t)$, and computing rate allocation decision $\boldsymbol{f}(t)$.




\section{Experimental Results}\label{Sec:simulation}
In this section, we conduct simulations for performance evaluation of the proposed OIODRL method.
All simulations are conducted on a Pytorch 2.1.2 platform with an Intel Core i7-13700K $\SI{3.40}{\giga\hertz}$ CPU and $\SI{32.0}{\giga\byte}$ memory. 
\subsection{Parameter Setting}
In simulations, we consider that the length of each time slot is $\SI{15}{\min}$ and totally we simulate $1152$ time slots ($12$ days). 
There are totally $10$ service types and $5$ users.
The data size of service is uniformly distributed from $\SIrange{2}{6}{\giga\byte}$, the purchasing price of a service is uniformly distributed in $[1, 50]$ unit, and the refreshing price of a service is set to be one-tenth of the purchasing price of the service.
For each task from users, the data size follows the truncated Gaussian distribution from $\SI{500}{\mega\byte}$ to $\SI{2}{\giga\byte}$, and the data size of task processing result is one-tenth that of the task data size.
Each byte of task requires $330$ CPU cycles for processing.
By default, the storage capacity and computing capacity of the ES is $\SI{16}{\giga\byte}$ and $\SI{5.4e9}{\hertz}$, respectively.
The transmission rate between the ES and the CS is $\SI{25}{\mega\byte}/\rm{s}$.
The computing rate allocated by the CS to each task is $\SI{2e9}{\hertz}$.

Besides, for each service, the CS receives an update at each time slot with $25\%$ probability, and the services are updated independently from each other. 
Moreover, the threshold value of the long-term average AoI ($A_j^{\max}$) is set to be uniformly distributed in $[5, 10]$ time slots by default.
Besides, we set the Lyapunov parameter $V$ to $1$ by default and set the cost weights $\lambda_D=0.1$, $\lambda_c=1$, $\lambda_p=1$, $\lambda_s=10$. 

In the proposed OIODRL method, we use the same structure of DNNs in both the actor and critic networks. The adopted DNN consists of $9$ full connected layers with $2048\times 2048$ neurons in the largest layer, and uses GELU activation function and dropout method with probability $50\%$ as a regularization technique and to avoid co-adaptation of neurons.
The parameters in the advantage calculation of (\ref{eq:advantage}) are set to $\gamma = 0.8$ and $\lambda = 0.95$, the clip rates $\epsilon_\pi$ in $l_t^\text{clip}(\theta)$ and $\epsilon_v$ in $v^{\text{clip}}_{\phi}(s_t)$ are both set to $0.2$.
During the training process, the batch size is $256$, and the initial learning rate is $1\times 10^{-3}$ and gradually decays to $1\times 10^{-4}$. Besides, the Adam optimizer is used when the actor and critic networks are updated.

\subsection{Verification of Advantage of PPO Algorithm}
To illustrate the superiority of the PPO algorithm in the proposed OIODRL, we implement the following DRL algorithms to replace the PPO algorithm in our OIODRL method: 
\begin{itemize}
    \item \textit{Advantage actor critic (A2C)} \cite{mnih2016asynchronous}: The structure of the A2C algorithm is the same as that of the PPO algorithm in our proposed OIODRL. To implement A2C, the loss function of the actor network in (\ref{eq:loss_actor}) is replaced by    
    \begin{align}\nonumber
    l_t^\text{act}(\theta)=\mathbb{E}[\log\pi_{\theta}(x_{ij}|s_t)\hat{R}(t)]+\beta l_t^\text{ent}(\theta),
    \end{align}
    which contains the advantage-based loss and the entropy bonus, 
    and the loss function of the critic network in (\ref{eq:loss_critic}) is replaced by 
    \begin{align}\nonumber
    l_t^\text{cri}(\phi)=\mathbb{E}[(v_{\phi}(s_t)-v^{\text{tar}}_{\hat{\phi}}(s_t))^2].
    \end{align}
    
    \item \textit{Deep Q-network (DQN)} \cite{mnih2013playing}: The DQN algorithm uses a neural network to estimate the action-value function, which is called a Q-network. The loss function of the Q-network at time slot $t$, denoted as $l_t^\text{DQN}$, is give by
    \begin{align}\nonumber
    l_t^\text{DQN}=\mathbb{E}[r(s_t,a_t)+\rho\max_{a_{t+1}} Q(s_{t+1}, a_{t+1})-Q(s_{t}, a_{t})],
    \end{align}
    where $\rho\in[0,1]$ is the discount factor of future rewards and designed to be the same as $\gamma$ in (\ref{eq:advantage}), and $Q(s, a)$ is the action-value function given state $s$ and action $a$.
\end{itemize}

Fig. \ref{fig:DRL} shows the learning curves of OIODRL with PPO, A2C, and DQN. 
PPO, A2C, and DQN converge at around $50$, $80$, and $20$ iterations, respectively.
Compared with A2C, PPO converges to a higher reward and is more stable in learning process. Compared with DQN, PPO achieves a much higher reward.  


\begin{figure}[t]
    \centering
     \includegraphics[width=0.7\linewidth]{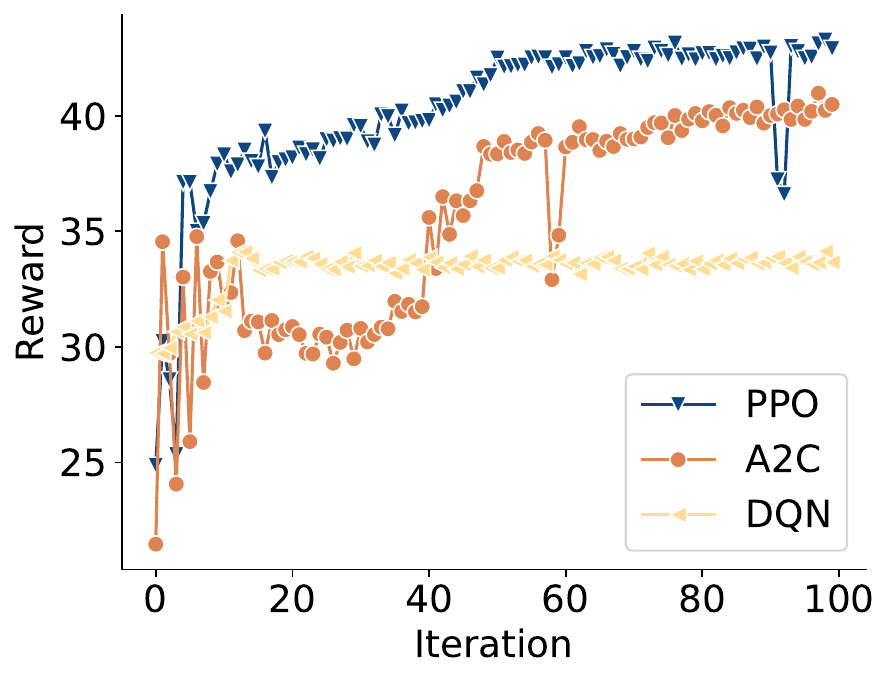}
    \caption{Learning curves with different DRL algorithms.}
    \label{fig:DRL}
\end{figure}

\subsection{Performance Evaluation}
To demonstrate the performance of our proposed OIODRL, we use the following benchmark methods for comparison. Note that the benchmark methods use the same approaches as ours to get the downloading decision and computing rate allocation decision, i.e., by Lemma \ref{lemma_cases} and (\ref{eq:f_solution}), respectively. 
Thus, the benchmark and OIODRL methods are different in making caching decision and offloading decision.
\begin{itemize}
    \item \textit{OPTIMAL}: An exhaustive search method is used to obtain the optimal caching decision and offloading decision for problem $\mathcal{P}2$ with time complexity $\mathcal{O}(2^{N+M})$.
    
    \item \textit{PPO-only}: This method only adopts the PPO algorithm without the optimization stage to solve problem $\mathcal{P}2$. The structure of PPO-only is similar to that of the PPO algorithm used in OIODRL, except that the PPO-only has two actor networks, in charge of caching decisions and offloading decisions, respectively.
    The loss functions of the actor network for offloading decisions and the critic network are given by (\ref{eq:loss_actor}) and (\ref{eq:loss_critic}), respectively.
    The loss function of the actor network for caching decisions is similar to (\ref{eq:loss_actor}), which is denoted as $l_t'(\theta)$ and given by
    \begin{align}
    l_t'(\theta)=&\mathbb{E}[\min\{\varphi'_t(\theta)\hat{R}(t),\text{clip}(\varphi'_t(\theta),1-\epsilon_\pi,1+\epsilon_\pi)\hat{R}(t)\}\nonumber\\
    &-\beta\sum_{z_{j}\in\{0,1\},j\in\mathcal{J}}
    \pi_{\theta}(z_{j}|s_t)\ln\pi_{\theta}(z_{j}|s_t)
    ],\nonumber
    \end{align}
    where $\varphi'_t(\theta)\triangleq\frac{\pi_{\theta}(z_{j}|s_t)}{\pi_{\hat{\theta}}(z_{j}|s_t)}$ is the probability ratio between the new policy and old policy of caching decisions.
    
    \item \textit{SDP-only}: This method only uses SDR method without the learning stage to solve problem $\mathcal{P}2$. The solution of problem $\mathcal{P}3$ via an SDP optimization toolbox is probably infeasible according to Lemma \ref{lemma_relax}. The infeasible caching decisions and offloading decisions are rounded to binary decisions, and then constraints (\ref{cons:S}) and (\ref{cons:xy}) are verified. If the caching decisions exceed the storage capacity, a cached service is randomly selected to be removed until constraint (\ref{cons:S}) is satisfied. If constraint (\ref{cons:xy}) is not satisfied, set the corresponding offloading decision to $0$.
    
    \item \textit{Joint service caching, computation offloading and resource allocation (JSCR) \cite{9385953}}: JSCR adopts the SDR method and rounding down to obtain a feasible solution, and then uses the alternating optimization method to further find a better solution.
    
    \item \textit{FIXED}: This method caches the fixed service types and processes the tasks of fixed service types at the ES, regardless of the system state. In the subsequent simulations, FIXED always caches services $1$ and $2$, and processes tasks $1$ and $2$ at the ES, by default.
\end{itemize}


\begin{figure}[t]
    \centering
    \includegraphics[width=0.7\linewidth]{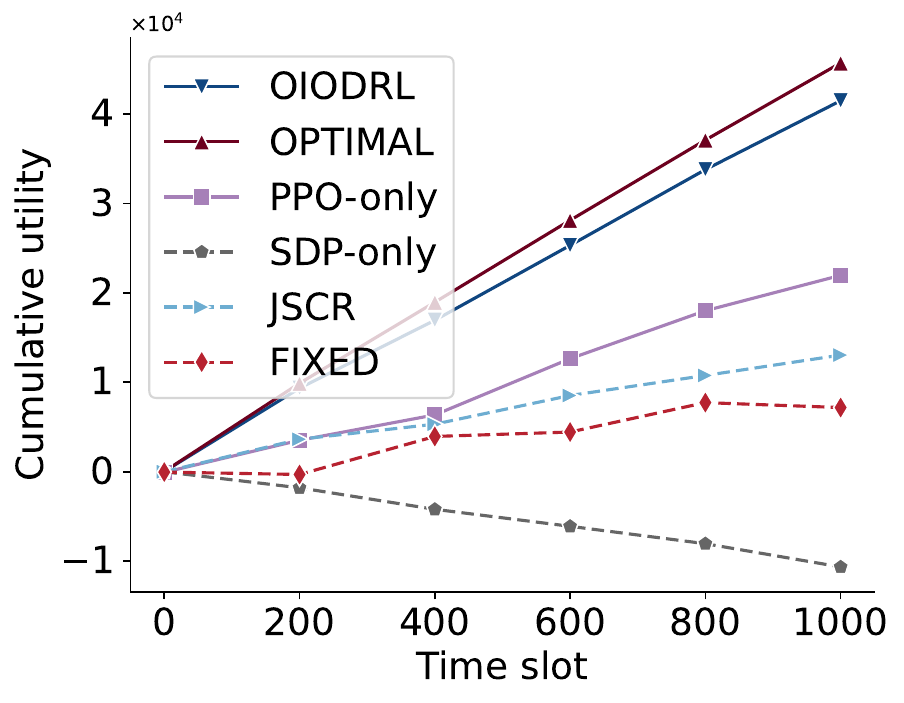}
    \caption{The performance of utility with different methods.}
    \label{fig:default_utility}
\end{figure}



The performance of cumulative utility with different methods is shown in Fig. \ref{fig:default_utility}. The proposed OIODRL performs similarly to the OPTIMAL and outperforms the others. 
It can be seen that PPO-only or SDP-only cannot achieve the best performance.
Besides, JSCR performs better than SDP-only, indicating that the alternating optimization method used in JSCR indeed improves the performance of SDR method while still not enough.
FIXED always makes fixed decisions and is not intelligent, and thus, obtains a low utility.

\begin{figure}[t]
    \centering
    \subfloat[Average reward.]{\includegraphics[width=0.5\linewidth]{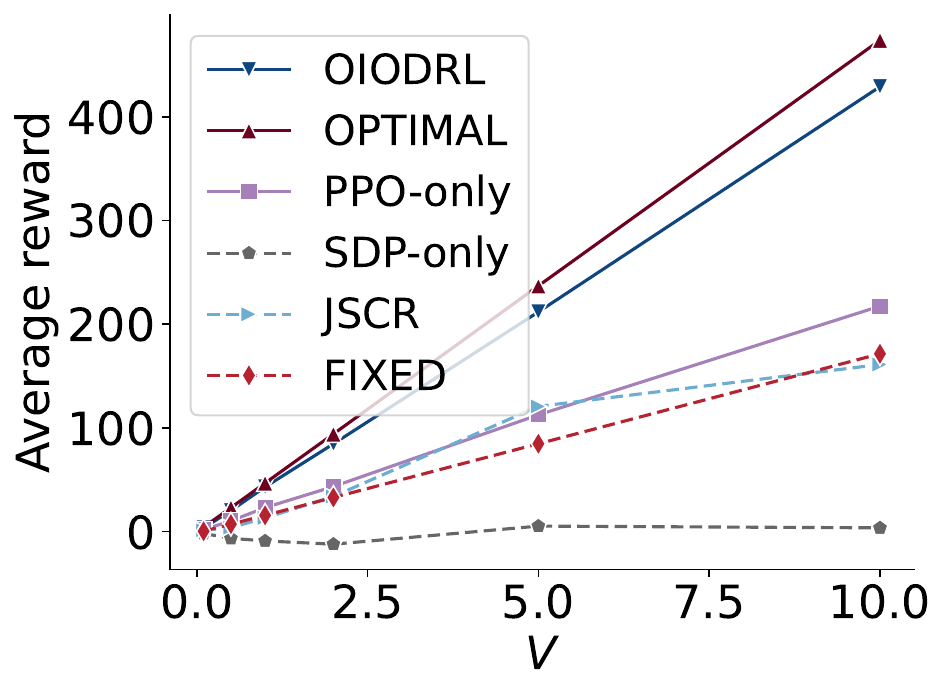}}
    \subfloat[Average utility.]{\includegraphics[width=0.5\linewidth]{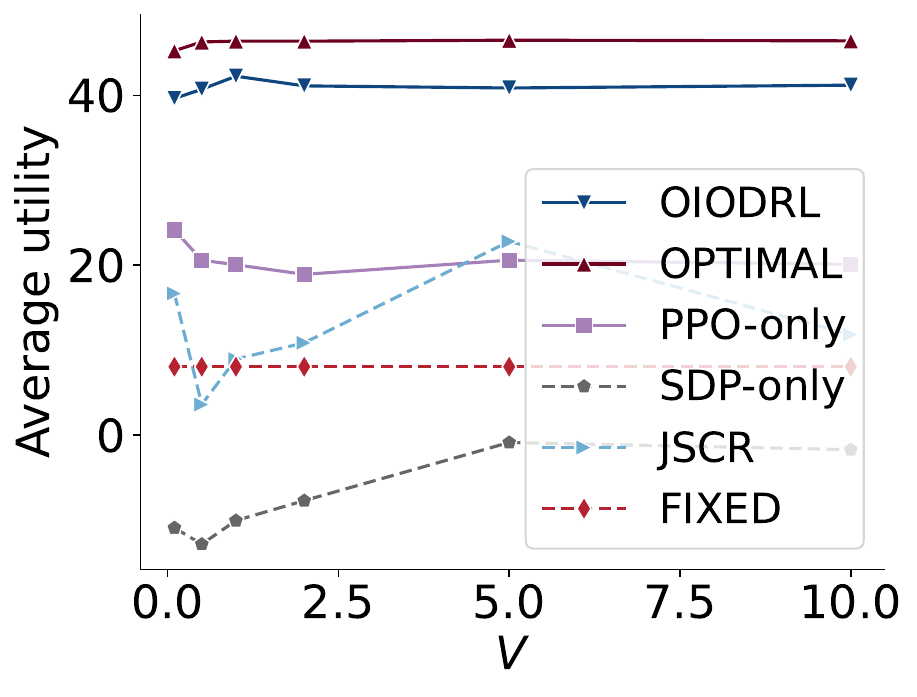}}
    \caption{The impact of $V$ on the reward and utility.}
    \label{fig:V_reward_utility}
\end{figure}



\begin{figure}[t]
    \centering
    \includegraphics[width=0.7\linewidth]{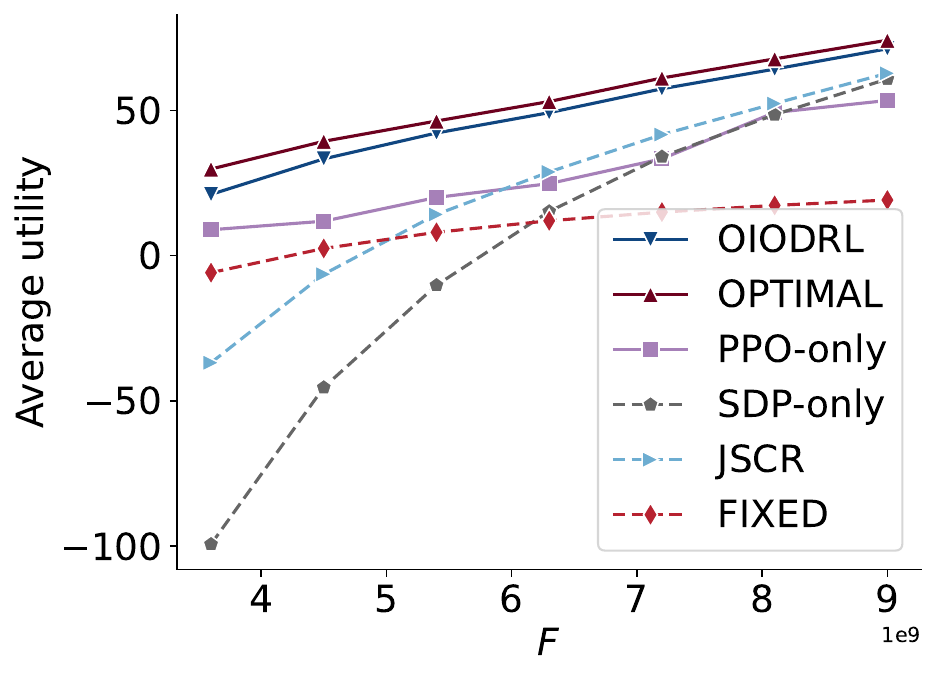}
    \caption{The impact of $F$ on the utility.}
    \label{fig:F_utility}
\end{figure}

\begin{figure}[t]
    \centering
    \includegraphics[width=0.7\linewidth]{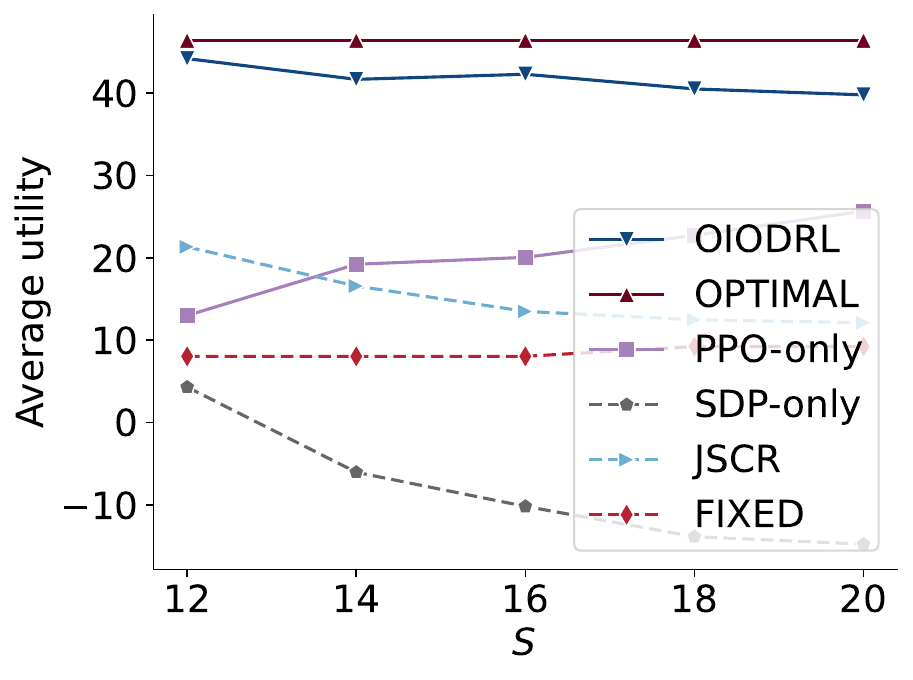}
    \caption{The impact of $S$ on the utility.}
    \label{fig:S_utility}
\end{figure}

\begin{figure}[t]
    \centering
    \includegraphics[width=0.7\linewidth]{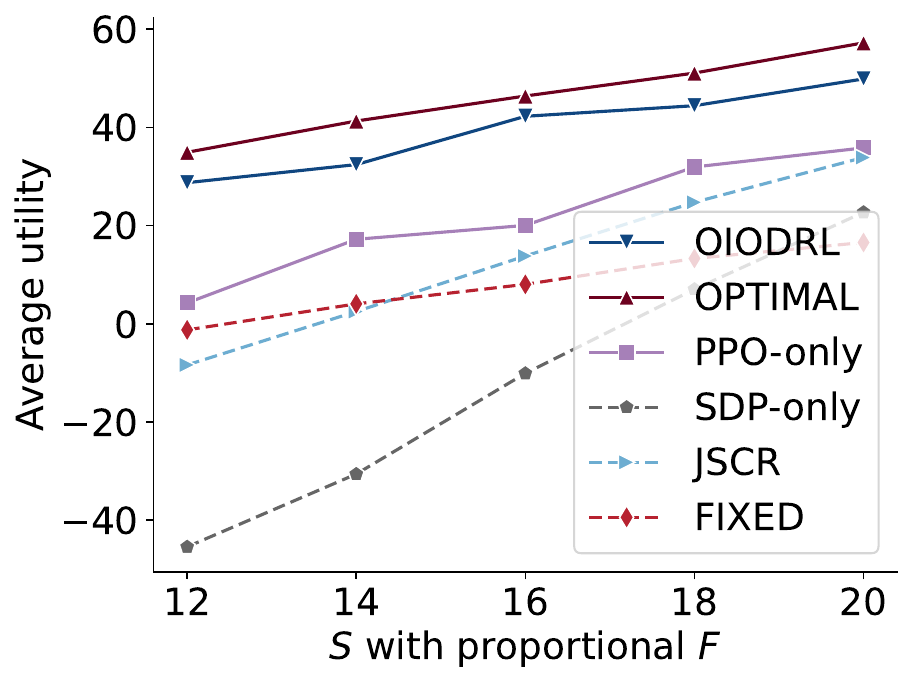}
    \caption{The impact of $S$ with proportional $F$ on the utility.}
    \label{fig:scaled_S_utility}
\end{figure}

Fig. \ref{fig:V_reward_utility}\,(a) shows the impact of $V$ (the weight parameter in Lyapunov drift-plus-penalty function) on the reward with different methods.
With the increasing of $V$, the rewards of all methods increase, since more rewards are assigned to the utility.
OIODRL performs similarly to OPTIMAL and outperforms other benchmark methods.
Fig. \ref{fig:V_reward_utility}\,(b) shows the impact of $V$ on the utility with different methods.
It can be seen that at the beginning of $V$ increase (more emphasis on the utility), both OPTIMAL and OIODRL show a brief increase on the utility. After $V=1$, they both converge. Therefore, we set the default value of $V$ to $1$.
FIXED always chooses the same strategy regardless of $V$, which results in the same low utility.
PPO-only, SDP-only, and JSCR are not stable when $V$ increases.

In Fig. \ref{fig:F_utility} shows the impact of $F$ (the computing capacity of the ES) on the utility with different methods.
As $F$ increases, the utilities of all methods increase, since the ES is more powerful to process the assigned tasks and the process delay is reduced.
For every $F$ value, OIODRL performs similarly to OPTIMAL and outperforms other benchmark methods. 


Fig. \ref{fig:S_utility} shows the impact of $S$ (the storage capacity of the ES) on the utility with different methods.
OIODRL performs close to OPTIMAL, and they achieve the best utilities. 
Note that the performance of OPTIMAL remains stable as $S$ increases. This stability is due to the fact that while $S$ increases without a corresponding increase in $F$, even if the ES has more storage capacity for caching services, it lacks sufficient computing resources to process the associated tasks.
Then, OPTIMAL will not cache more services as $S$ increases.
OIODRL performs similarly with a very slight decrease.
For PPO-only, a larger storage capacity may provide more exploration for a better utility. For SDP-only and JSCR, they cache more services when $S$ increases, and then process more tasks at the ES with insufficient computing resources, which leads to performance degradation.

Fig. \ref{fig:scaled_S_utility} shows the impact of $S$ with proportional $F$ (i.e., $S$ and $F$ vary with the same factor) on the utility with different methods.
With increased storage capacity and computing capacity of the ES, the performance of all methods improves. 
Furthermore, the performance of OIODRL consistently approaches that of OPTIMAL and surpasses those of the other methods.

\section{Conclusion}\label{Sec:conclusion}
In this paper, we provide online low-latency and fresh services to users in MEC networks. 
Different from existing works, we consider the freshness of cached edge services. 
This problem is formulated as a joint online long-term optimization problem, which is non-convex and NP-hard. 
To manage online provisioning without future system information, we propose a Lyapunov-based online framework that decouples the long-term optimization problem into a number of per-time-slot subproblems. 
For each temporally decoupled subproblem, we propose the OIODRL method, which combines the advantages of optimization techniques and DRL algorithms.
Extensive simulations demonstrate the advantages and superiority of OIODRL compared to other DRL algorithms and benchmark methods.

{
\footnotesize

}

\end{document}